\theoremstyle{plain}
\newtheorem{proposition}{Proposition}
\newtheorem{theorem}{Theorem}
\newtheorem{corollary}{Corollary}
\newtheorem{lemma}[theorem]{Lemma}
\newtheorem{definition}{Definition}
\newtheorem*{example}{Example}
\newtheorem{remark}{Remark}
\newtheorem{assumption}{Assumption}
\newcommand{\N}{\mathbb{N}}
\newcommand{\R}{\mathbb{R}}
\newcommand{\Rmnum}[1]{\expandafter\@slowromancap\romannumeral #1@}
\title{Improving Robust Decisions with Data\footnote{This paper is a revised version of a chapter in my Northwestern University doctoral dissertation. The extended abstract of an earlier version appeared in EC'22. I am deeply indebted to Peter Klibanoff and Marciano Siniscalchi for their guidance and support. Comments from the editor and four anonymous referees have greatly improved the paper. I also thank Federico Bugni, Ivan Canay, Tristan Gagnon-Bartsch, Shaowei Ke, Yonggyun Kim, R.Vijay Krishna, Yijun Liu, Edwin Munoz Rodriguez, Isaias N.Chaves, Eran Shmaya, and Chen Zhao for insightful comments.}}
\author{Xiaoyu Cheng\footnote{Department of Economics, Florida State University, Tallahassee, FL, USA. E-mail: \href{mailto:xcheng@fsu.edu}{xcheng@fsu.edu}.}}
\begin{document}
\maketitle

\begin{abstract}
    A decision-maker faces uncertainty governed by a data-generating process (DGP), which is only known to belong to a set of sequences of independent but possibly non-identical distributions. A robust decision maximizes the expected payoff against the worst possible DGP in this set. This paper characterizes when and how such robust decisions can be \emph{objectively} improved with data --- that is, yield higher expected payoffs under the true DGP regardless of which DGP is the truth. It further develops simple and novel inference procedures that achieve such improvement, while common methods (e.g., maximum likelihood) may fail to do so.

\bigskip 
\textit{JEL: D81, D83, C44}

\textit{Keywords: Maxmin expected utility, $\Gamma$-minimax, learning under ambiguity, statistical decisions}  
\end{abstract}

\newpage 
\section{Introduction}\label{intro}
When a decision-maker (DM) lacks sufficient knowledge about the probability law governing an uncertain environment, they may want their decisions to be robust. This concern for robustness is often modeled by ranking choices by their worst-case expected payoffs among a set of possible laws \citep{10.1214/aoms/1177698602, GILBOA1989141, HansenSargent+2007, Carroll2019}. With insufficient knowledge, gathering data is a classic means of learning about the environment to help guide decisions. Here, learning involves using data to revise the set of possible laws. A revision with data is said to provide an \textit{objective improvement} if the data-revised robust decision yields a higher expected payoff under the \textit{true} law than the robust decision without any data-revision, the benchmark robust decision. This paper studies when and how data may be used to achieve such an objective improvement regardless of which possible law is the true one. 

The analysis is conducted in a decision environment where it is a priori unclear whether and how data can generate objective improvement --- one in which the true law cannot be uniquely identified even asymptotically. Specifically, the DM faces a countable sequence of random experiments (e.g., coin flips) that share the same set of outcomes. The probability law governing these realizations, referred to as the data-generating process (DGP), is a sequence of independent but possibly non-identical distributions over the outcomes. The DM initially knows there is a set of possible DGPs that contains the true one, observes sample data given by realizations of some experiments, and then makes a decision whose payoff depends only on future outcomes. 

This non-identical decision environment captures settings with unobserved heterogeneity. For example, consider an online platform, say Netflix, making content recommendations to users based on feedback from other users with similar profiles. Users' preferences, however, are usually also determined by their various offline activities that cannot be observed, which often further affect how they interact with the platform. As a result, the sequential user feedback data collected by Netflix can be viewed as generated by a sequence of independent but possibly non-identical distributions.\footnote{See \cite{Cao2014} for a discussion on the issue of non-IIDness in online recommender systems.} In this case, Netflix is motivated to use its data in a robust manner to guard against the concern that the data could be more from one type of user, whereas future users that determine payoffs could be more of a different type. 

Regardless of what data Netflix/the DM observes, they can always rely on their initial belief to make the benchmark decision, which is robust against all initially possible DGPs. Notice, because the experiments are independent, the benchmark decision does not depend on the data at all.\footnote{With independence, the set of conditional distributions over future outcomes is solely determined by the set of DGPs regardless of the realized outcomes. Independence is assumed because this fact eases the exposition. In more general environments, the benchmark should be understood as the robust decision under full Bayesian or prior-by-prior updating \citep{pires2002rule}. Namely, the DM applies Bayes' rule to update every DGP in the initial set conditioning on the sample data to obtain a set of posteriors over future outcomes. In this case, all results either directly apply or generalize under standard conditions, see Appendix \ref{apx: dependent}.} As a result, the benchmark decision ensures robustness, but at the cost of ignoring any information that might lead to  better decisions. 

The goal of this paper is to develop a strategy for using data that addresses both the concern for robustness and the desire for improvement. To preview the final output, the proposed strategy ensures that, in a broad class of decision problems, no matter which DGP in the initial set is true, a sufficient amount of data will almost surely lead the data-revised decision to objectively improve upon the benchmark decision, and sometimes strictly so. With finite samples, the same holds with a pre-specified probability. Hence, this strategy enables the DM to achieve higher expected payoffs under the truth while preserving robustness, i.e., failing to do so only with either zero or a small pre-specified probability. 

What must be true for such a strategy? The following example illustrates a key observation of this paper. 

\begin{example}[Introductory Example] 
	Suppose Netflix is deciding how to recommend a movie to a population of users who might either like it (thumbs-Up) or dislike it (thumbs-Down). Let $\{U, D\}$ denote these two possible outcomes observed after a recommendation. Let full state space be $\{U, D\}^{\infty}$, i.e., the infinite Cartesian product of these outcomes. Denote any probability distribution over $\{U, D\}$ by the probability of $U$. A data-generating process, $P$, can be written as $P = P_{1} \times P_{2} \times \cdots$ with $P_{i} \in [0,1]$. Suppose Netflix's initial knowledge about the users is given by the set $\{(1/3)^{\infty}\} \cup \{3/5, 1\}^{\infty}$, where $(1/3)^{\infty}$ denotes an i.i.d. sequence with marginal probability $1/3$ and the other set is defined as
	\begin{equation*}
		\{3/5, 1\}^{\infty} \equiv \{P: P_{i} \in \{3/5, 1\}\}.
	\end{equation*}
	Intuitively, users share some similarities: they either all like the movie with the same low probability $1/3$, or all like it with high but possibly heterogeneous probabilities, $3/5$ or $1$. Suppose Netflix observes feedback from having recommended the movie to $N$ users and must decide how to recommend it to future users. 
	
	\textbf{Decision Problem \Rmnum{1}.} Suppose Netflix can recommend the movie either \textit{aggressively} (a) or \textit{mildly} (m).\footnote{For instance, the movie can be recommended either on top of the front page or further down the list. All payoffs can be interpreted as cardinal measures of how much the recommendation affects users' likelihood of interacting with Netflix.} The aggressive recommendation yields a higher payoff than a mild one if users like the movie, but is worse otherwise. Specifically, let $a(U) = 2$, $a(D) = -1$, $m(U) = 1/2$, and $m(D) = 0$ be the state-contingent payoffs. Within the initial set of DGPs, the worst-case DGP for both types of recommendation is $(1/3)^{\infty}$, since it gives the lowest probability of getting a $U$. Thus, the benchmark decision is to choose $m$. 
	
	With the sample data, Netflix can make a data-revised decision by revising the initial set of DGPs. A commonly used method is maximum likelihood updating, which revises the initial set to the subset of DGPs that maximize the likelihood of observing the data.\footnote{Proposed and axiomatized in \citet{GILBOA199333} and \citet{CHENG2022102587}, this updating rule is applied in \citet{Epstein2007} for studying problems in a similar setting.} When the true DGP is $(1/3)^{\infty}$, with $N$ sufficiently large, Netflix will almost surely observe sample data with an empirical frequency of $U$ close to $1/3$. The likelihood of any such sample, however, is never maximized by the true DGP in the initial set. Indeed, given any such sequence of outcomes, one can always find a DGP in the set $\{3/5, 1\}^{\infty}$ that assigns probability 1 to outcome $U$ whenever $U$ is observed and probability $3/5$ to outcome $U$ whenever $D$ is observed. In this case, notice that for all $N$, 
	\begin{equation*}
		(1/3)^{N/3} \times (2/3)^{2N/3} < 1^{N/3} \times (2/5)^{2N/3},
	\end{equation*}
	i.e., the likelihood of observing the given sample under this heterogeneous DGP is always strictly higher than under the true one. Thus, with maximum likelihood updating, Netflix will asymptotically almost surely rule out the true DGP from their data-revised set.\footnote{While the present paper, to my knowledge, is the first to make such an observation in the context of maximum likelihood updating, it is intrinsically related to the infamous incidental parameter problem in making estimates using maximum likelihood, discovered by \cite{Neyman1948} (see \cite{LANCASTER2000391} for a review), and more broadly, the problem of overfitting. The same observation continues to hold for likelihood-based updating rules, such as those proposed in \citet{Epstein2007} and \citet{CHENG2022102587}.} As the worst-case probability of $U$ in the data-revised set increases to $3/5$, the aggressive recommendation becomes Netflix's data-revised decision. However, this decision gives Netflix an expected payoff of $0$ under the true DGP, which is strictly lower than the benchmark $(1/6)$. Therefore, with maximum likelihood updating, Netflix will almost surely be worse off than under their benchmark decision when the true DGP is $(1/3)^{\infty}$.
\end{example}

The preceding example shows that ruling out the true DGP can lead to objectively worse decisions. Theorem \ref{thm: truth_accommodating_refinement_basic} formalizes this observation by showing that accommodating (i.e., containing, with the necessary technical generalizations) the true DGP when revising the initial set is necessary for guaranteeing objective improvements. This necessity holds even when restricting attention to \emph{basic} decision problems, those consist of two alternatives, one of which yields a constant payoff. Since likelihood-based rules fail this criterion, this paper develops a simple \textbf{empirical distribution method} that accommodates the truth, as illustrated below. In the introductory example, this also proves sufficient for objective improvement.

\begin{example}[Introductory Example Continued]
	Under the empirical distribution method, Netflix revises the initial set to include DGPs whose average mixture of sample marginals, i.e., $1/N\sum_{i=1}^{N} P_{i}$ in this example, is close enough to the empirical distribution of observed outcomes. Kolmogorov's strong law of large numbers ensures that the true DGP will be retained in such data-revised sets asymptotically almost surely. More importantly, whenever the true DGP is retained, the data-revised decision is $m$ if the true DGP is $(1/3)^{\infty}$, and $a$ if the true DGP belongs to the set $\{3/5, 1\}^{\infty}$. In all cases, it is either the same or strictly better than the benchmark decision, thereby achieving objective improvement.
\end{example}

Notice that in Decision Problem \Rmnum{1}, both alternatives rank the two states $U$ and $D$ the same in terms of payoffs. Hence, a \emph{higher} probability of $U$ leads to a \emph{higher} expected payoff under each alternative. I define a decision problem (i.e., a set of alternatives) that possesses this property as a \textit{monotone decision problem}.\footnote{This terminology stems from the definition of a ``monotone decision problem" in \citet{ATHEY2018101}.} More specifically, all alternatives in a monotone decision problem are positive affine transformations of a common payoff function. In this class of problems, accommodating the truth is both necessary and sufficient for achieving objective improvement (Theorem \ref{thm_monotone}). Monotone decision problems arise naturally in economic contexts; Section \ref{characterization} further illustrates how this result may be applied to canonical principal-agent models. 

However, this sufficiency does not extend beyond monotone decision problems (Theorem \ref{thm_monotone_converse}). In other words, such monotonicity is \textit{the} property that characterizes when accommodating the truth suffices for objective improvement. 

In searching for objective improvements in all decision problems, Theorem \ref{thm_impossible} and Corollary \ref{cor_atmost_one} together provide an impossibility result: It is achieved if and only if the true DGP is uniquely identified from the data. Such identification, however, is generally infeasible in non-identical environments.\footnote{As in the example, for any sample size $N$, there always exist multiple DGPs in the set $\{3/5, 1\}^{\infty}$ such that they are the same up to the $N$-th experiment, but have different marginals over future outcomes.} Nevertheless, accommodating the truth still provides a weaker yet meaningful guarantee of objective improvement in all decision problems (Proposition \ref{prop: improving_worst_case}). In particular, it ensures that the decision maker would never prefer to forgo the opportunity to revise decisions using data in exchange for the certainty-equivalent payoff of the benchmark decision.

To summarize, all these results jointly provide three decision-theoretic justifications for accommodating the truth in data revision: (1) it is necessary for objective improvements, (2) it is sufficient in monotone decision problems; and (3) it provides a weaker but still useful guarantee in all decision problems. The paper then proceeds to develop practical methods that accommodate the truth in non-identical environments.

\bigskip 

First, I show that the simple empirical distribution method accommodates the truth asymptotically almost surely (Theorem \ref{thm_asymptotic}). 

Additional concerns arise when considering data samples of bounded size. The standard finite-sample approach is to develop methods that accommodate the truth with a pre-specified probability. This entails making statistical inferences from independent but non-identical samples, which can be computationally challenging. To address this difficulty, I develop an easy-to-implement method, the \emph{augmented i.i.d. test}, which operates as a simple augmentation to the standard practice of constructing confidence regions from i.i.d. data. Theorem \ref{thm_confident} establishes that this method accommodates the truth with at least the required pre-specified probability. For a concrete illustration, Section \ref{Bernoulli} presents a Bernoulli model with ambiguous nuisance parameters and shows that the data-revised sets given by the augmented i.i.d. test are simple extensions of the \textit{Wilson Confidence Intervals} \citep{Wilson1927}. 

Similar non-identical decision environments have been adopted in the literature to study, for instance, social learning with heterogeneous individuals \citep{reshidi2020information, Chen2019}, dynamic portfolio choice under ambiguous idiosyncratic shocks \citep{Epstein2007}, and estimating parameters of incomplete models \citep{Epstein2016}. In all these models, the revision rules developed in this paper can provide new perspectives and often distinct predictions. As an illustration, in Section \ref{Gaussian}, I consider the model in \citet{reshidi2020information}. They assume the DM applies prior-by-prior updating and show there can be non-vanishing ambiguity. Proposition \ref{prop_gaussian} here shows all that ambiguity vanishes asymptotically if the DM instead applies the revision rules proposed here. In fact, learning under the new revision rules is more effective compared to commonly used updating rules and is guaranteed to be correct. 

Finally, I briefly summarize the main contributions of this paper to the literature on robust statistical decisions \citep{wald1950statistical, Watson2016, Hansen2016, manski2021econometrics}. A more detailed discussion is provided in Section \ref{literature}. Most papers in this literature focus on only the ``data-revised decision'' in their decision context. The main innovation here is to explicitly compare the data-revised decision with the benchmark decision that could have been made without using data. This comparison turns out to be not always obvious. Indeed, I show that it is impossible to guarantee the data-revised decision to be always better whenever the true DGP is not uniquely identified. In light of this impossibility, the present paper contributes to the literature by (1) characterizing when the simple criterion of accommodating the truth suffices for objective improvement and (2) providing practical inference methods that achieve it in both asymptotic and finite-sample settings.

\textbf{Outline.} Section \ref{setup} introduces the decision environment. Section \ref{characterization} characterizes when and how objective improvements can be achieved. Section \ref{revision} defines and studies the proposed revision rules. Section \ref{application} presents two applications with parametric models. Section \ref{literature} provides a further discussion on related literature. All proofs are collected in the Appendix. 

\section{The Decision Environment}\label{setup}
There is a countably infinite sequence of random experiments that all have the same finite set of outcomes $S$ with generic element $s$.\footnote{The finiteness assumption is made for simplicity. Section \ref{sec: monotone_moment} and Appendix \ref{apx: dependent} provide discussions on how to extend the results in this paper to infinite state spaces.} The experiments are ordered and indexed by the set $\N = \{1,2,\cdots\}$. The \emph{full state space} is denoted by $\Omega = S^{\infty} \equiv \prod_{i} S_{i}$ with generic element $\omega$. Let $\Sigma$ denote the discrete $\sigma$-algebra on $S$ and $\Sigma^{\infty}$ the product $\sigma$-algebra on $\Omega$. 

For any given sample size $N \in \N$, let $S^{N} \equiv  \prod_{i=1}^{N}S_{i}$ denote the \emph{sample states}. A decision-maker (DM) observes \emph{sample data} $\omega^{N} \in S^{N}$ and makes a decision whose payoff depends only on future unrealized experiments. Let $\Omega_{N} \equiv \prod_{i=N+1}^{\infty} S_{i} \equiv S_{N}^{\infty}$ denote the \emph{future states} with generic element $\tilde{\omega}_{N}$. Define $\Sigma_{N}$ and $\Sigma_{N}^{\infty}$ similarly. 

The DM's decision is the choice from a set of alternatives, named \emph{acts}. An act is defined as a bounded $\Sigma_{N}^{\infty}$-measurable function, $f: \Omega_{N} \rightarrow \R$, that maps future states to payoffs (measured in utilities). Let $\mathcal{F}$ be the space of all acts endowed with the product topology.\footnote{For simplicity, I suppress the dependence of $\mathcal{F}$ on $N$. This is without loss of generality since any act on $\Omega_{N}$ can be isomorphically defined on $\Omega_{M}$ for all $M \in \N$.} An act is \emph{finitely-based} if it depends only on finitely many experiments (i.e., not on tail events). Let $x \in \R$ denote a constant act that gives the same payoff $x$ in all future states. A \emph{decision problem} $D$ is a nonempty and compact subset of $\mathcal{F}$. Let $\mathcal{D}$ denote the collection of all decision problems. Say that a decision problem $D$ is \emph{binary} if $|D| = 2$. 

I call a sequence of independent but possibly non-identical distributions over outcomes a \emph{data-generating process (DGP)}. Formally, let $\Delta(\Omega)$ denote the set of all countably additive probability measures on $(\Omega,\Sigma^\infty)$, endowed with the topology of setwise convergence. The collection of all DGPs is the subset $\Delta_{\text{indep}}(\Omega)= \prod_{i=1}^\infty \Delta(S_i)\subseteq\Delta(\Omega)$. On $\Delta_{\text{indep}}(\Omega)$, the subspace topology inherited from $\Delta(\Omega)$ under setwise convergence coincides with the product topology on $\prod_{i=1}^\infty \Delta(S_i)$. For each $P \in \Delta_{\text{indep}}(\Omega)$, write $P_i$ for its marginal distribution on $S_i$, $P^N$ for its joint marginal on $S^N$, and $P_N^\infty$ for its marginal on $\Omega_N = S_N^{\infty}$.

Suppose the DM knows there is an \emph{initial set} $\mathcal{P}$ of possible DGPs. When fixing some sample data $\omega^{N}$, let $P^{*}$ denote the \emph{true DGP} that generates the data and governs the future states. The DM's initial knowledge is ``correctly specified'', meaning that it always contains the true DGP. Using the terminology from Bayesian learning literature \citep{Kalai1993}, the DM's initial knowledge contains a ``grain of truth'': 

\begin{assumption}\label{assumption1}
	$P^{*} \in \mathcal{P}$. 
\end{assumption}

What this assumption also means is that every DGP in the initial set could be the true one governing the experiments. In addition, assume the set $\mathcal{P}$ is compact and all $P \in \mathcal{P}$ have full support.\footnote{This is a standard assumption for studying the update of a set of probability measures. It is also used when establishing a Central Limit Theorem in this environment. See the proof of Lemma \ref{lem1} for details.} Let $\mathcal{P}_{i}$, $\mathcal{P}^{N}$ and $\mathcal{P}_{N}^{\infty}$ denote the sets of their corresponding marginals. 

Given a decision problem $D$, the DM can make a robust decision by using the initial set of DGPs. Formally, let the \emph{benchmark decision}, denoted by $c(D) \in D$, be given by
\begin{equation*}
	c(D) \equiv \arg\max\limits_{f \in D}\min\limits_{P \in \mathcal{P}} \int_{\Omega_{N}} f(\tilde{\omega}_{N})dP_{N}^{\infty}(\tilde{\omega}_{N}) = \arg\max\limits_{f \in D}\min\limits_{P_{N}^{\infty} \in \overline{co}(\mathcal{P}_{N}^{\infty})} \int_{\Omega_{N}}   f(\tilde{\omega}_{N})dP_{N}^{\infty}(\tilde{\omega}_{N}),
\end{equation*}
where $\overline{co}(\mathcal{P}_{N}^{\infty})$ denotes the closed and convex hull of $\mathcal{P}_{N}^{\infty}$. The ``min'' is well-defined as $\mathcal{P}$ is compact. The equality follows since the minimum can be achieved at an extreme point and thus belongs to $\mathcal{P}$. Moreover, $c(D)$ is defined as a singleton by imposing an arbitrary (but fixed) tie-breaking rule. 

With sample data $\omega^{N}$, the DM can also choose to revise their initial set to a \emph{data-revised set} of DGPs. Let $\mathcal{P}(\omega^{N}) \subseteq \Delta_{indep}(\Omega)$ denote the revised set. Given Assumption \ref{assumption1}, further let $\mathcal{P}(\omega^{N}) \subseteq \mathcal{P}$ as there is no need to consider DGPs outside $\mathcal{P}$. Importantly, the data-revised set is assumed to depend only on the initial set $\mathcal{P}$ and sample data $\omega^{N}$ but not on $D$, i.e., the specific decision problem considered. In other words, the DM's revision rule is purely ``data-based''. In the literature on belief updating, this is known as consequentialism, see \citet{hanany2007updating, hanany2009updating} and \citet{Siniscalchi2009-SINTOO-5} for discussions. 

Let $c(D, \omega^{N}) \in D$ denote the DM's \emph{data-revised decision} given (the closed and convex hull of) their data-revised set $\mathcal{P}(\omega^{N})$, i.e., 
\begin{equation*}
	c(D, \omega^{N}) \equiv \arg\max\limits_{f \in D}\min\limits_{P_{N}^{\infty} \in \overline{co}(\mathcal{P}(\omega^{N})_{N}^{\infty})} \int_{\Omega_{N}} f(\tilde{\omega}_{N})dP_{N}^{\infty}(\tilde{\omega}_{N}),
\end{equation*}
where the ``min'' is well defined as $\overline{\mathrm{co}}\big(\mathcal{P}(\omega^{N})_{N}^{\infty}\big)$ is compact and $c(D,\omega^{N})$ is also defined to be a singleton by imposing a tie-breaking rule, subject to the following consistency requirement:
\begin{assumption}\label{assumption4}
If $c(D)$ is among the maximizers in the data-revised program, then $c(D,\omega^{N})=c(D)$. 
Otherwise, the tie-breaking can be arbitrary.
\end{assumption}

Assumption \ref{assumption4} makes sure that if the DM's data-revised set coincides with their initial set, then their data-revised decision is the same as the benchmark. This rules out uninteresting complications when the DM's decisions are different only because of different tie-breakings.\footnote{Also notice both decisions are defined as deterministic choices from $D$, which seems to rule out randomizations from the DM's decisions. This is, in fact, more general as one can explicitly add those randomizations as acts and it will just be another well-defined decision problem. The current definition allows for richer decision patterns: By imposing different tie-breaking rules, it allows the DM to have different preferences in terms of whether randomization hedges against ambiguity. See \cite{Saito2015} and \cite{Ke2020} for discussions and characterizations of such preferences.} 

Importantly, the DM's \emph{objective payoff} from an act is determined by the true DGP $P^{*}$ that actually governs the future. To make this explicit, let 
\begin{equation*}
	W(f, P^{*}) = \int_{\Omega_{N}} f(\tilde{\omega}_{N}) dP_{N}^{*\infty}(\tilde{\omega}_{N})
\end{equation*}
denote the expected payoff of act $f$ evaluated under the true DGP $P^{*}$. For any decision problem $D$, the DM's objective payoffs from their benchmark and data-revised decisions are then $W(c(D), P^{*}) $ and $W(c(D,\omega^{N}), P^{*})$, respectively. 

Because the DM's decisions depend only on future states and are made using the robust criterion, it is useful to define the following notions that suitably generalize set inclusions: 
\begin{definition}\label{def: accommodating and refinement}
	The data-revised set $\mathcal{P}(\omega^{N})$ \textbf{accommodates} a DGP $P$ if 
	\begin{equation*}
		P_{N}^{\infty} \in \overline{co}(\mathcal{P}(\omega^{N})_{N}^{\infty}). 
	\end{equation*}
	The data-revised set $\mathcal{P}(\omega^{N})$ \textbf{refines} the initial set $\mathcal{P}$ if 
	\begin{equation*}
		\overline{co}(\mathcal{P}(\omega^{N})_{N}^{\infty}) \subsetneqq \overline{co}(\mathcal{P}_{N}^{\infty}).
	\end{equation*}
	Say that $\mathcal{P}(\omega^{N})$ is a \textbf{truth-accommodating refinement} if it accommodates the true DGP and refines the initial set. 
\end{definition}

Notice that if $P^{*} \in \mathcal{P}(\omega^{N})$, then $\mathcal{P}(\omega^{N})$ accommodates $P^{*}$. Moreover, $\mathcal{P}(\omega^{N})$ refines $\mathcal{P}$ only if $\mathcal{P}(\omega^{N})$ is a proper subset of $\mathcal{P}$. In practice, since a convex combination of DGPs in $\Delta_{indep}(\Omega)$ remains in $\Delta_{indep}(\Omega)$ only when the DGPs share identical marginals in all but one experiment, a truth-accommodating refinement is thus essentially $(P^{*})_{N}^{\infty} \in \mathcal{P}(\omega^{N})_{N}^{\infty} \subsetneqq \mathcal{P}_{N}^{\infty}$.

\section{Objective Improvements}\label{characterization}
Fix an initial set $\mathcal{P}$, some sample data $\omega^{N}$ generated by the true DGP $P^{*} \in \mathcal{P}$. This section investigates what guarantees objective improvements across different decision problems. To this end, consider the following definition. 

\begin{definition}\label{def: objective_improvement_class}
    Let $\mathcal{C} \subseteq \mathcal{D}$ denote a \textbf{class} of decision problems. A data revision is said to provide \textbf{objective improvement in $\mathcal{C}$} if, for all $D \in \mathcal{C}$, 
	\begin{equation*}
		W(c(D, \omega^{N}), P^{*}) \geq W(c(D), P^{*}),
	\end{equation*}
	and the inequality is strict for some $D \in \mathcal{C}$. 
\end{definition}

Among all decision problems, the simplest possible form is the choice between an uncertain and a constant act. Such a canonical form is often used to model, for example, the decision of whether or not to approve a new drug, implement a new policy, convict a defendant, and invest in an asset. I call such decision problems, i.e., binary with a constant act, \emph{basic decision problems}. Arguably, basic decision problems are the building blocks of more complex decision problems, thus any general enough class of decision problems should include them as special cases. Guaranteeing objective improvements at least in all basic decision problems is a reasonable minimal requirement for any data revision. The following result identifies a necessary condition for this guarantee: The data-revised set must accommodate the true DGP.

\begin{theorem}\label{thm: truth_accommodating_refinement_basic}
	If the data-revised set $\mathcal{P}(\omega^{N})$ does not accommodate the true DGP $P^{*}$, then there exists a basic decision problem for which the data-revised decision is objectively worse than the benchmark decision, i.e., $W(c(D, \omega^{N}), P^{*}) < W(c(D), P^{*})$.
\end{theorem}

The proof of Theorem \ref{thm: truth_accommodating_refinement_basic} relies on a standard separating hyperplane argument by noticing that $P^{*} \in \mathcal{P}$ (Assumption \ref{assumption1}), but $(P^{*})_{N}^{\infty} \notin \overline{co}(\mathcal{P}(\omega^{N})_{N}^{\infty})$ (Definition \ref{def: accommodating and refinement}). The key takeaway is that if the data-revised set fails to accommodate the true DGP, then one can easily construct a basic decision problem for which the data-revised decision is objectively worse. In this sense, Theorem \ref{thm: truth_accommodating_refinement_basic} highlights that accommodating the true DGP is a necessary condition for achieving objective improvement in any class of decision problems including basic decision problems as special cases.

A truth-accommodating refinement, a slight strengthening of this necessary condition, can be shown to be also sufficient for objective improvement in basic decision problems. The next section establishes this result by characterizing the essentially largest class of decision problems for which this sufficiency holds, a class that includes all basic decision problems.

\subsection{Monotone Decision Problems}\label{sec: monotone_decision_problems}
What structural feature of a decision problem makes a truth-accommodating refinement sufficient for objective improvement? To build intuition, consider a \emph{betting decision problem}, where every available act is a bet on the \emph{same} event. Formally, in a betting decision problem, there exists $A \subseteq \Omega_{N}$ such that for each $f$, for all $\tilde{\omega} \in A$ and $\tilde{\omega}' \in A^{c}$, 
\begin{equation*}
	f(\tilde{\omega}) = f(A) \geq f(A^{c}) = f(\tilde{\omega}').
\end{equation*}
In words, all acts rank outcomes in the same way across the two events, $A$ and $A^{c}$, but differ in their payoff tradeoffs. As a result, a higher belief in $A$ (i.e., a greater probability assigned to $A$) leads to a higher expected payoff of every act. Moreover, all acts can be ordered so that higher acts are optimal under higher beliefs. This is in line with the definition of monotone decision problems in \citet{ATHEY2018101},\footnote{Their definition says that the optimal act is monotone in signal $x$, which corresponds to a posterior belief over states. The additional property here is that it also leads to a higher expected payoff.}  which highlights that both payoffs and choices move in the same direction as beliefs.

This form of monotonicity is precisely what makes a truth-accommodating refinement sufficient for objective improvement. When choosing how much to bet on the event $A$, the worst-case DGP is always the one assigning the lowest probability to $A$. A truth-accommodating refinement raises this worst-case probability while keeping it below the true probability. Monotonicity then guarantees that the data-revised decision moves closer to the exact optimal act under $P^{*}$, thereby delivering a higher objective payoff than the benchmark. 

Motivated by this intuition, I now generalize the betting problem to higher dimensions by defining a class of \emph{monotone decision problems}.

\begin{definition}\label{def_monotone}
    A decision problem $D$ is called a \textbf{monotone decision problem} if there exists an act $g \in \mathcal{F}$ (not necessarily contained in $D$) such that, for every $f \in D$, 
    \begin{equation*}
        f = \lambda_{f} g + c_{f}, 
    \end{equation*}
    for some $\lambda_{f} \geq 0$ and $c_{f} \in \R$. Let $\mathcal{D}_{m}$ denote the class of monotone decision problems. 
\end{definition}

Intuitively, all acts in a monotone decision problem are non-negative affine transformations of a common reference act $g$. Hence they induce the same ranking over future states and satisfy
\begin{equation*}
    W(f, P) = \lambda_{f} W(g, P) + c_{f}, \quad \forall f \in D.
\end{equation*}
Thus the DM's evaluation reduces to the single statistic $W(g,P)$: a higher belief in $g$ (i.e., a greater expected payoff of $g$) raises the expected payoff of every act. This generalizes the monotonicity observed in betting problems. Importantly, the ``monotone'' label encodes a \emph{one-dimensional} structure: although the underlying state space may be high-dimensional, all payoff-relevant variation collapses to the single expectation of $g$. From a geometric perspective, after subtracting the constant, all acts are aligned in the same direction. Thus, choosing among these acts is essentially trading off between their sensitivity to beliefs $(\lambda_{f})$ against their guaranteed payoffs $(c_{f})$, just as in betting decision problems one trades off payoffs across two events. 

Basic decision problems are trivially monotone, since the only non-constant act can be taken as the reference act. It thus follows from Theorem \ref{thm: truth_accommodating_refinement_basic} that accommodating the true DGP is necessary for objective improvement in monotone decision problems. The next result establishes sufficiency.

\begin{theorem}\label{thm_monotone}
	A data revision provides objective improvement in monotone decision problems if and only if the data-revised set is a truth-accommodating refinement. 
\end{theorem}

Theorem \ref{thm_monotone} delivers a clear and powerful message: a truth-accommodating refinement is sufficient to guarantee objective improvement in monotone decision problems, regardless of the true DGP. Its proof builds on the intuition developed in betting decision problems, but the extension is substantial: monotone decision problems constitute a broader class that encompasses a wide range of economically relevant environments. To illustrate this concretely, I next show that the problem of choosing among linear contracts in a canonical principal-agent model is a monotone decision problem. Consequently, Theorem \ref{thm_monotone} applies directly and yields new insights.

\begin{example}\label{ex:linear-contract} [Improving Linear Contracts with Data]  
A principal hires an agent to work on a project whose output is given by $y = \theta e + \epsilon$, where $\theta \in \mathbb{R}_{+}$ denotes the agent's productivity, $e \in \mathbb{R}_{+}$ the agent's effort, and $\epsilon$ a random noise with $\mathbb{E}[\epsilon]=0$. The agent's effort cost is $c(e)=k e^{2}/2$ for some $k>0$. 

The principal faces uncertainty about the agent's productivity. Let $\Theta \subseteq \mathbb{R}_{+}$ denote the set of possible productivity levels and $\mathcal{P}\subseteq \Delta(\Theta)$ the principal's initial belief set. The principal's objective is to choose a robustly optimal \textbf{linear contract} to offer the agent, given $\mathcal{P}$ and a possible revision using data from past projects.

A linear contract specifies a base wage $\alpha \in \mathbb{R}$ and a share $\beta \in [0,1]$ of output. Given $(\alpha,\beta)$, the agent chooses effort $e$ to maximize $\mathbb{E}_{\epsilon} \big[\alpha + \beta(\theta e + \epsilon) - k e^{2}/2\big]$, yielding the optimal effort $e^{*}(\theta,\alpha,\beta)=\beta\theta/k \geq 0$. Thus, given $(\alpha,\beta)$ and $\theta$, the agent's expected payoff is $U_{A}(\theta;\alpha,\beta) = \alpha+\beta^{2}\theta^{2}/2k$ and the principal's expected payoff is 
\begin{align*}
\pi_{(\alpha,\beta)}(\theta) = \mathbb{E}_{\epsilon} \big[(1-\beta)(\theta e^{*}(\theta,\alpha,\beta)+\epsilon)-\alpha\big]
=\frac{\beta(1-\beta)}{k}\theta^{2}-\alpha.
\end{align*}

Let $u_{0}:\Theta\to\mathbb{R}$ denote the agent's outside-option payoff. Under robustness considerations, the principal restricts attention to contracts that satisfy individual rationality (IR) uniformly across all types.\footnote{The uniform IR condition provides one natural way to ensure that the linear-contract problem is monotone. Nevertheless, the monotone structure can also be preserved under alternative IR formulations, provided that the set of participating types remains invariant across contracts.} This pins down contracts to those satisfying
\begin{equation*}
\alpha\ge \sup_{\theta\in\Theta} \left( u_{0}(\theta)-\frac{\beta^{2}\theta^{2}}{2k} \right)\equiv \alpha_{\min}(\beta).
\end{equation*}
Because the principal's payoff is decreasing in $\alpha$, it is without loss to focus on $\alpha=\alpha_{\min}(\beta)$. Thus, the principal's expected payoff from a linear contract with share $\beta$ becomes
\begin{equation*}
\pi_{\beta}(\theta) =\frac{\beta(1-\beta)}{k}\theta^{2}-\alpha_{\min}(\beta) \equiv \lambda_{\beta}\,g(\theta)+c_{\beta}.
\end{equation*}
This implies that choosing among feasible contracts (i.e., among $\beta\in[0,1]$) is to choose among acts $\pi_{\beta}$ that are non-negative affine transformations of the common act $g(\theta) = \theta^{2}$, i.e., a monotone decision problem.

By Theorem \ref{thm_monotone}, if the principal revises the initial belief set $\mathcal{P}$ about the agent's productivity to a truth-accommodating refinement using data, the resulting data-revised contract is guaranteed to yield a higher expected payoff than the benchmark contract regardless of the true productivity distribution.
\end{example}

The linear-contract example illustrates how monotone decision problems can arise in economic settings. The key is to identify a decision problem where all available options can be represented as non-negative affine transformations of a common act. In this example, the assumption of linear contracts is essential for representing the principal's decision problem as monotone. This assumption may be partially justified by the prominent role of linear contracts under robustness concerns \citep{Carroll2015, carroll2016}. Nevertheless, to extend the observations here to more contracting environments, the generalizations identified in the next section are useful.

\subsection{Monotone-Like Decision Problems}\label{sec: monotone_moment}
Theorem \ref{thm_monotone} identifies monotonicity as a sufficient condition for a truth-accommodating refinement to guarantee objective improvement when $\mathcal{P}$ is arbitrary and acts are functions from states to payoffs. Its core intuition, in fact, extends more broadly as additional structure is imposed on either the initial set or the acts. Notice the key force behind Theorem \ref{thm_monotone} is two-fold: (i) all acts share a common worst-case DGP, and (ii) a truth-accommodating refinement shifts the robust decision towards higher payoffs under every DGP in the revised set, thereby ensuring an objective improvement regardless of which DGP is the truth. In this section, I illustrate that these two forces can be found in several alternative formulations that are relevant in applications.

\subsubsection{Monotone Decision Problems Under FOSD}
Fix an order on $\Omega$. Say that $\mathcal{P}$ is \emph{FOSD-comparable} if all DGPs in $\mathcal{P}$ are totally ordered by first-order stochastic dominance with respect to this order. A decision problem $D$ is \textbf{monotone under FOSD} if, under the same order on $\Omega$, every act in $D$ is increasing and for any $f, g \in D$, the difference $f - g$ is monotone (either increasing or decreasing). This requirement is weaker than Definition \ref{def_monotone} as it does not restrict acts to be affine transformation of one another. Nevertheless, it preserves the two forces identified above and thus guarantees objective improvement under truth-accommodating refinements.

\begin{corollary}\label{cor:fosd}
	Fix an order on $\Omega$ and suppose $\mathcal{P}$ is FOSD-comparable. If the data-revised set is a truth-accommodating refinement, then the data revision provides objective improvement in all decision problems that are monotone under FOSD.
\end{corollary}

To see why, notice when $\mathcal{P}$ is FOSD-comparable, all increasing acts share the same worst-case DGP. If $\mathcal{P}(\omega^{N})$ is a truth-accommodating refinement, then the true DGP $P^{*}$ dominates the common worst-case DGP $P_{1}$ in $\mathcal{P}(\omega^{N})$, which in turn dominates the common worst-case DGP $P_{2}$ in $\mathcal{P}$. Let $f = c(D)$ and $g = c(D, \omega^{N})$. It follows that $W(f, P_{2}) \leq W(g, P_{2})$ and $W(g, P_{1}) > W(f, P_{1})$. Let $h = g - f$ and by monotonicity, $h$ must be increasing. Hence $W(h, P^{*}) \geq W(h, P_{1}) > 0$, which implies the desired objective improvement.

\begin{example}[Improving (Non-Linear) Contracts with Data]
Continuing the previous example, but additionally suppose the principal's initial knowledge $\mathcal{P}$ is FOSD-comparable with respect to $\Theta \subseteq \R_{+}$. This additional structure allows the principal to consider more general contracts beyond linear ones while still being able to guarantee objective improvement under truth-accommodating refinements.

Concretely, let $\Theta = [\underline{\theta}, \overline{\theta}] \subseteq \R_{+}$ and suppose the principal expands the contract space to include the following quadratic form of contracts:
\begin{align*}
    w_{(\beta, \gamma)}(y) = \alpha_{\min}(\beta,\gamma) + \beta y + \frac{\gamma}{2} y^{2}, 
\end{align*}
with $\beta \in [0,1]$, $\gamma < (1-\beta)k/\overline{\theta}^{2}$, and $\alpha_{\min}(\beta,\gamma)$ the minimal base wage satisfying uniform IR. For simplicity, let $\epsilon \equiv 0$. The upper bound on $\gamma$ ensures that $e^{*}(\theta, \beta, \gamma) = \beta\theta/(k - \gamma \theta^{2})$ is always well-defined and the principal's payoff, 
\begin{align*}
	\pi_{(\beta, \gamma)}(\theta) = \frac{(1-\beta)\beta\theta^{2}}{k - \gamma \theta^{2}} - \frac{\gamma \beta^{2} \theta^{4}}{2(k - \gamma \theta^{2})^{2}} - \alpha_{\min}(\beta,\gamma),
\end{align*}
is increasing in $\theta$. However, it is not necessarily true that for any two feasible contracts $(\beta, \gamma)$ and $(\beta', \gamma')$, the difference $\pi_{(\beta, \gamma)}(\theta) - \pi_{(\beta', \gamma')}(\theta)$ is monotone in $\theta$. Nevertheless, if the principal's optimal contracts before and after data revision are such that this difference is monotone in $\theta$, then objective improvement can be established by examining only these two contracts: by Corollary \ref{cor:fosd}, objective improvement holds in the decision problem restricted to this pair. Since the principal in fact selects these same two contracts in the full problem, enlarging the contract space to include additional contracts that are not chosen does not affect the objective-improvement conclusion.
\end{example}

\subsubsection{Monotone Moment Decision Problems}
In fields such as information design  \citep{gentzkow2016, Kolotilin2017,dworczak2019a}, among others, decision problems are sometimes modeled as choosing among options whose payoffs depend on the distribution over states only through a scalar \emph{moment} (e.g., the mean). Formally, fix a bounded measurable map $m: \Omega_{N} \rightarrow \R$ and write $m(P) \equiv \mathbb{E}_{P}[m(\omega)]$. A \emph{moment act} is a function $\overline{f}:\R \to \R$ that assigns payoff $\overline{f}(m(P))$ under distribution $P$. Note that while every (state–contingent) act induces an expectation under each $P$, not every moment act corresponds to such a state–contingent act, especially when $\overline f$ is nonlinear in the moment (e.g., $\overline f(x)=x^{2}$).

A decision problem $\overline D$ is a \textbf{monotone moment decision problem} if all acts are increasing moment acts and \emph{single crossing} holds: for all $\overline f,\overline g\in\overline D$, if $\overline f(x)\ge \overline g(x)$ at some $x$, then $\overline f(x')\ge \overline g(x')$ for all $x'\ge x$. As before, these conditions ensure the same two forces identified above. The following corollary summarizes the conclusion.

\begin{corollary}\label{cor:moment}
	If the data-revised set is a truth-accommodating refinement, then the data revision provides objective improvement in all monotone moment decision problems.
\end{corollary}

\begin{example}[Improving Linear Contracts with Data Under Generalized Preferences]
Continuing the linear-contract example, another key assumption that makes the principal's problem monotone is the principal's payoff form. Allowing the principal's preference to depend non-linearly on the expected output but still linear in the expected payment would generally break the monotonicity as in Definition \ref{def_monotone}. Specifically, let $u: \R \to \R$ be an increasing function such that the principal's payoff under a linear contract $(\alpha, \beta) \in \R \times [0,1]$ and a distribution $P \in \Delta(\Theta)$ is 
\begin{equation*}
	\overline{\pi}_{(\alpha,\beta)}(P) = u\left(\frac{\beta}{k} \mathbb{E}_{P}[\theta^{2}]\right) - \frac{\beta^{2}}{k} \mathbb{E}_{P}[\theta^{2}] - \alpha.
\end{equation*}
Let $m(P) = \mathbb{E}_{P} [\theta^{2}]$, then $\overline{\pi}_{(\alpha,\beta)}$ is in the form of a moment act. When $u(\cdot)$ is twice differentiable, $u'(z) \geq 1$ and $u'(z) + z u''(z) \geq 2$ on the relevant range of moments is sufficient for all $\overline{\pi}_{(\alpha,\beta)}$ to be increasing and pairwise single-crossing in $m(P)$. Then by Corollary \ref{cor:moment}, again, a truth-accommodating refinement guarantees objective improvement in choosing among linear contracts.
\end{example}

Monotone decision problems under FOSD and monotone moment decision problems illustrate two distinct yet complementary directions for generalizing the notion of monotonicity while preserving the two key forces underlying Theorem~\ref{thm_monotone}. As emphasized at the beginning, within any specific decision context, whenever these two forces are present, a truth-accommodating refinement guarantees objective improvement.

\subsection{Beyond Monotonicity}
Outside the monotone and monotone-like classes identified above, where a common worst case and a directional monotonicity of payoff differences obtain, the guarantee that truth-accommodating refinements yield objective improvement can break down. The next result shows that this failure is robust: it could arise for arbitrary non-singleton refinements, or arbitrary non-monotone binary decision problems.

\begin{theorem}\label{thm_monotone_converse}
    The following statements are true. 
    \begin{enumerate}[(i)]
        \item For any non-singleton data-revised set $\mathcal{P}(\omega^{N})$ that refines an initial set $\mathcal{P}$, there exists some $P^{*} \in \mathcal{P}(\omega^{N})$ and a non-monotone decision problem $D$ such that
        \begin{equation*}
			W(c(D, \omega^{N}), P^{*}) < W(c(D), P^{*}).
		\end{equation*}
        
        \item For all non-monotone binary decision problem $D$ with finitely-based acts $f_{1}$ and $f_{2}$, if there exists $P \neq P' \in \Delta_{indep}(\Omega)$ such that $W(f_{1}, P) > W(f_{2}, P)$, and $W(f_{1}, P') < W(f_{2}, P')$, then there exists an initial set $\mathcal{P}$, a data-revised set $\mathcal{P}(\omega^{N})$, and a true DGP $P^{*}$ such that $\mathcal{P}(\omega^{N})$ is a truth-accommodating refinement, yet
		\begin{equation*}
			W(c(D, \omega^{N}), P^{*}) < W(c(D), P^{*}).
		\end{equation*}
    \end{enumerate}
\end{theorem}

Theorem \ref{thm_monotone_converse} identifies two different impossibility directions for extending sufficiency beyond monotone decision problems. Part (i) rules out any guarantee based solely on the refinement itself: for every non–singleton refinement $\mathcal{P}(\omega^{N})$ of $\mathcal{P}$ there is a true DGP and a non-monotone problem for which the data-revised choice underperforms the benchmark. Part (ii) emphasizes that there is no particular way of deviating from monotonicity that maintains sufficiency.\footnote{The statement restricts to finitely-based acts to avoid a small caveat involving tail events. See Remark \ref{remark: non_finitely_based} in the proof for details. In addition, similar negative results can be stated for non-binary decision problems as the argument involves only two acts, those that are chosen by the benchmark and data-revised decisions. The presence of other acts only complicates the construction.} In other words, monotonicity is not merely sufficient; it is essentially the boundary for when a truth-accommodating refinement guarantees objective improvement.

To gain some intuition of why without monotonicity, a truth-accommodating refinement could lead to a strictly worse decision, consider the following example. 

\begin{example}[Introductory Example continued] \textbf{Decision Problem \Rmnum{2}.} Netflix decides whether to \textit{include} (i) or \textit{remove} (r) the movie from their recommendations. The key difference from Decision Problem \Rmnum{1} is that the two actions now rank the states in opposite ways: removing the movie yields a higher payoff when users dislike it, whereas including it yields a higher payoff when users like it. Numerically, let the payoffs be $i(U) = 1$, $i(D) = -1$, $r(U) = 0$, and $r(D) = 1$. Because the two alternatives rank the two states differently, this decision problem is non-monotone.

In this case, Netflix's benchmark decision is to choose $r$. If Netflix revises the initial belief using the empirical distribution method described in the introduction, then their data-revised decision will be $r$ when the true DGP is $(1/3)^{\infty}$ and $i$ when the true DGP belongs to the set $\{3/5, 1\}^{\infty}$. However, if the true DGP is $(3/5)^{\infty} \in \{3/5, 1\}^{\infty}$, the expected payoff of $i$ is $1/5$, strictly lower than the expected payoff of $r$, equal to $2/5$. 
\end{example}

In Decision Problem \Rmnum{2}, the two acts rank the two states differently. While a higher probability of $U$ implies a ``higher'' optimal act, it does not necessarily lead to a higher expected payoff. This non-monotonicity invalidates the previous intuition. In this case, the worst-case DGPs for the two acts are those assigning the lowest probability to $U$ and $D$, respectively. While a truth-accommodating refinement still ensures that the lowest probabilities of $U$ and $D$ in the data-revised set are greater than those in the initial set but less than the true probabilities, the different levels of probability increase may lead the data-revised decision to be further away from the exact optimal act.

Decision Problem \Rmnum{2} illustrates one direction where a decision problem can deviate from monotonicity. For all the other directions, see the example in the proof of Theorem \ref{thm_monotone_converse} for an illustration. 

\subsection{Impossibility of Objective Improvement in All Decisions}
Are there stronger conditions than truth-accommodating refinements that could guarantee objective improvement beyond monotone decision problems? This section provides a negative answer when considering the set of all decision problems, i.e., when $\mathcal{C} = \mathcal{D}$. 

Obviously, a sufficient condition is when $P^{*}$ is uniquely identified from $\omega^{N}$ and $\mathcal{P}$ is not a singleton. In this case, by letting  $\mathcal{P}(\omega^{N}) = \{P^{*}\}$, the DM's data-revised decision is exactly optimal against $P^{*}$, thus always improves. However, revising a non-singleton initial set to a singleton set containing the true DGP is not always feasible, especially when the possible DGPs can be non-identical. When the data-revised set is not a singleton, the following theorem shows that objective improvement in all decision problems requires the true DGP to be effectively uniquely identified. 

\begin{theorem}\label{thm_impossible}
	A data revision provides objective improvement in all decision problems if and only if there exists $\alpha \in (0,1]$ such that 
	\begin{equation*}
		\overline{co}(\mathcal{P}(\omega^{N})_{N}^{\infty}) = \alpha P^{*\infty}_{N} + (1-\alpha)\overline{co}(\mathcal{P}_{N}^{\infty}). 
	\end{equation*}
\end{theorem}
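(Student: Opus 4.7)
The plan is to translate the set-theoretic hypothesis into a pointwise identity between maxmin values and then run optimality comparisons in both directions. Set $V_0(f) = \min_{Q \in \overline{co}(\mathcal{P}_N^\infty)} \int f\, dQ$ and $V_1(f) = \min_{Q \in \overline{co}(\mathcal{P}(\omega^N)_N^\infty)} \int f\, dQ$. Because a weak$^{*}$-closed convex set of probability measures on $\Omega_N$ is determined by its minimum-expectation functional, the hypothesis
\[ \overline{co}(\mathcal{P}(\omega^N)_N^\infty) = \alpha P^{*\infty}_N + (1-\alpha)\overline{co}(\mathcal{P}_N^\infty) \]
is equivalent, by linearity of expectation, to the identity $V_1(f) = \alpha W(f, P^*) + (1-\alpha) V_0(f)$ for every act $f$. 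Sufficiency then falls out: for any $D$ with $f^0 = c(D)$ and $f^1 = c(D, \omega^N)$, the inequalities $V_1(f^1) \geq V_1(f^0)$ and $V_0(f^0) \geq V_0(f^1)$ combine with the identity to give $\alpha[W(f^1, P^*) - W(f^0, P^*)] \geq (1-\alpha)[V_0(f^0) - V_0(f^1)] \geq 0$, which for $\alpha > 0$ is weak improvement. Strict improvement in some $D$ (when $\overline{co}(\mathcal{P}_N^\infty)$ is not a singleton) is produced by separating $P^*$ from some other $Q \in \overline{co}(\mathcal{P}_N^\infty)$ to obtain an act $f$ with $V_0(f) < V_1(f) < W(f, P^*)$, and then the problem $D = \{c, f\}$ with any constant $c \in (V_0(f), V_1(f))$ yields benchmark $c$, revised $f$, and strictly higher truth-payoff for the latter.

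For necessity, suppose no $\alpha \in (0, 1]$ yields the identity. If $\overline{co}(\mathcal{P}(\omega^N)_N^\infty) = \overline{co}(\mathcal{P}_N^\infty)$ (the $\alpha = 0$ case), Assumption \ref{assumption4} forces $c(D,\omega^N) = c(D)$ for every $D$, so strict improvement is impossible. Otherwise, setting $a(f) = V_1(f) - V_0(f) \geq 0$ and $b(f) = W(f, P^*) - V_0(f) \geq 0$, the failure of $a = \alpha b$ for every $\alpha \in (0, 1]$ must fall into one of two scenarios: (i) the ratio $a(f)/b(f)$ is non-constant over $\{f : b(f) > 0\}$; or (ii) some $f$ satisfies $b(f) = 0$ but $a(f) > 0$. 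In case (i), pick $f_1, f_2$ with $a(f_1)/b(f_1) > a(f_2)/b(f_2)$ and use the translation and positive-homogeneity invariance of $a$ and $b$ to arrange $V_0(f_1) = V_0(f_2) = 0$ and $W(f_1, P^*) = W(f_2, P^*)$; the ratio inequality becomes $V_1(f_1) > V_1(f_2) > 0$. Then $D = \{f_2 + \epsilon, f_1\}$ with $\epsilon \in (0, V_1(f_1) - V_1(f_2))$ has $c(D) = f_2 + \epsilon$, $c(D,\omega^N) = f_1$, and $W(c(D), P^*) - W(c(D,\omega^N), P^*) = \epsilon > 0$, contradicting weak improvement. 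In case (ii), after normalizing $V_0(f) = W(f, P^*) = 0$, the analogous problem $\{\epsilon, f\}$ with $\epsilon \in (0, V_1(f))$ produces the same contradiction.

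The main obstacle I anticipate is case (ii): it does not succumb to the ratio-based construction because $W(f, P^*) = V_0(f)$ forces $P^*$ to be a minimizer of $\langle f, \cdot \rangle$ over the benchmark set along this direction, and it arises precisely when $P^*$ is not accommodated by the revised set (accommodation would force $V_1(f) \leq W(f, P^*) = V_0(f)$ and hence $a(f) = 0$). It must be handled by a constant-act comparison rather than by rescaling two directions. Beyond this, verifying that the strict shift $\epsilon > 0$ in case (i) leaves no room for Assumption \ref{assumption4}'s tie-breaking to salvage the revised decision, and that the support-function characterization applies to closed convex sets in the weak$^{*}$ topology used here, are routine but deserve attention.
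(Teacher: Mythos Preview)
Your argument is correct and follows essentially the same support-function strategy as the paper's proof: both directions reduce the set identity to $V_1(f)=\alpha W(f,P^*)+(1-\alpha)V_0(f)$ and then build two-act counterexamples from directions where this fails. The only notable difference is that the paper defers your case~(ii) (truth not accommodated) to its later Theorem~\ref{thm_monotone}, whereas your direct constant-act construction keeps the argument self-contained; you should, however, make explicit why a \emph{constant} ratio $a/b$ exceeding $1$ is impossible (e.g.\ because $V_1$, as a lower envelope, must be concave, forcing $(1-\alpha)V_0$ to be concave and hence $\alpha\le 1$ when $\overline{co}(\mathcal{P}_N^\infty)$ is not a singleton), since otherwise your dichotomy (i)--(ii) is not obviously exhaustive.
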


The condition $\overline{co}(\mathcal{P}(\omega^{N})_{N}^{\infty}) = \alpha P^{*\infty}_{N} + (1-\alpha)\overline{co}(\mathcal{P}_{N}^{\infty})$ says that, after taking the closed convex hull of future marginals, the data-revised set needs to be a convex combination of the initial set and the true DGP. Observe that the only way to form such a data-revised set requires knowing exactly what $P^{*}$ is. But if it is the case, the DM should just let $\{P^{*}\}$ be their data-revised set. On the other hand, if $\overline{co}(\mathcal{P}(\omega^{N})_{N}^{\infty}) = \alpha P^{*\infty}_{N} + (1-\alpha)\overline{co}(\mathcal{P}_{N}^{\infty})$ for some $\alpha \in (0,1]$, then the same cannot hold for any other $P$ and $\alpha$ whenever $P_{N}^{\infty} \neq P_{N}^{*\infty}$. This can be seen by considering the probability of any arbitrary event. This impossibility thus leads to the following corollary. 

\begin{corollary}\label{cor_atmost_one}
	Any given data-revised set can provide objective improvement in all decision problems under at most one DGP (up to the same marginal over future states). 
\end{corollary}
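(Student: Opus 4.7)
The plan is to argue by contradiction, leaning entirely on Theorem \ref{thm_impossible}. I would suppose that one data-revised set $\mathcal{P}(\omega^N)$ provides an objective improvement in all decision problems under two candidate true DGPs $P^*, P^{**} \in \mathcal{P}$ whose marginals over future states differ, i.e.\ $P^{*\infty}_N \neq P^{**\infty}_N$. Applying Theorem \ref{thm_impossible} separately to each candidate produces scalars $\alpha, \beta \in (0,1]$ with
\begin{equation*}
\alpha P^{*\infty}_N + (1-\alpha)\overline{co}(\mathcal{P}_N^\infty) \;=\; \overline{co}(\mathcal{P}(\omega^N)_N^\infty) \;=\; \beta P^{**\infty}_N + (1-\beta)\overline{co}(\mathcal{P}_N^\infty),
\end{equation*}
so the entire task reduces to showing this identity forces $P^{*\infty}_N = P^{**\infty}_N$.

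Next I would test both sides against the probability of an arbitrary measurable event $A \subseteq \Omega_N$. Convexity of $\overline{co}(\mathcal{P}_N^\infty)$ makes $\{Q(A) : Q \in \overline{co}(\mathcal{P}_N^\infty)\}$ a closed interval $[q_-(A), q_+(A)]$, so matching the min and max endpoints yields
\begin{equation*}
\alpha P^{*\infty}_N(A) + (1-\alpha)q_{\pm}(A) = \beta P^{**\infty}_N(A) + (1-\beta)q_{\pm}(A).
\end{equation*}
Subtracting the two versions gives $(1-\alpha)[q_+(A) - q_-(A)] = (1-\beta)[q_+(A) - q_-(A)]$. If some $A^*$ satisfies $q_+(A^*) > q_-(A^*)$, this pins down $\alpha = \beta$, after which the endpoint equation collapses to $P^{*\infty}_N(A) = P^{**\infty}_N(A)$ on every $A$, since $\alpha > 0$. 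Otherwise $q_+(A) = q_-(A)$ for all $A$, so $\overline{co}(\mathcal{P}_N^\infty)$ is a singleton, and Assumption \ref{assumption1} places both $P^{*\infty}_N$ and $P^{**\infty}_N$ at that unique point. Either branch contradicts $P^{*\infty}_N \neq P^{**\infty}_N$.

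The hard part, to the extent there is one, is only this case split between ``flat'' events (where $\overline{co}(\mathcal{P}_N^\infty)$ assigns a single probability to $A$) and ``non-flat'' events, which is what decouples the singleton term $\alpha P^{*\infty}_N$ from the Minkowski-additive hull term $(1-\alpha)\overline{co}(\mathcal{P}_N^\infty)$ in the identity above. Once that separation is in hand the argument is pure endpoint bookkeeping, and no machinery beyond convexity of $\overline{co}(\mathcal{P}_N^\infty)$, the inclusion $\mathcal{P}_N^\infty \subseteq \overline{co}(\mathcal{P}_N^\infty)$, and Theorem \ref{thm_impossible} is required.
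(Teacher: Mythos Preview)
Your proposal is correct and follows essentially the same approach the paper sketches: the paper states the corollary immediately after noting that if $\overline{co}(\mathcal{P}(\omega^{N})_{N}^{\infty}) = \alpha P^{*\infty}_{N} + (1-\alpha)\overline{co}(\mathcal{P}_{N}^{\infty})$ holds for some $\alpha \in (0,1]$, then it cannot hold for any $P$ with $P_{N}^{\infty} \neq P_{N}^{*\infty}$, and that ``this can be seen by considering the probability of some arbitrary event.'' Your argument is exactly that, with the endpoint bookkeeping and the degenerate-singleton case spelled out; the paper provides no additional detail beyond that one-line hint, so you have simply fleshed out what the author left implicit.
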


Given Corollary \ref{cor_atmost_one}, if there are multiple DGPs the DM can by no means distinguish using the sample data (like the ones in the introductory example), then no data-revised set would be able to guarantee objective improvement in all decision problems under \textit{all of them}.

Crucially, a truth-accommodating refinement does not have this issue: A data-revised set continues to be a truth-accommodating refinement no matter which DGP accommodated by it turns out to be the truth. Therefore, Theorem \ref{thm_monotone} indeed further implies that the objective improvement can be guaranteed simultaneously under multiple DGPs, contrasting to the conclusion in Corollary \ref{cor_atmost_one}. 

In addition, a truth-accommodating refinement can also provide a weaker improvement guarantee in all decision problems: 

\begin{proposition}\label{prop: improving_worst_case}
    If the data-revised set $\mathcal{P}(\omega^{N})$ is a truth-accommodating refinement, then for all $D \in \mathcal{D}$,
    \begin{equation*}
		W(c(D, \omega^{N}), P^{*}) \geq \min\limits_{P \in \mathcal{P}} W(c(D), P), 
	\end{equation*}
	and the inequality is strict for some $D \in \mathcal{D}$. 
\end{proposition}

In words, the expected payoff from the data-revised decision is never lower than the guaranteed payoff of the benchmark decision. Hence, whenever the revision rule constitutes a truth-accommodating refinement, the DM would never prefer to forgo the opportunity to revise their decision using data in exchange for the benchmark's certainty-equivalent payoff. This improvement guarantee is relatively weak but not trivial: a misled data-revised decision could be objectively worse than this certainty equivalent. Truth accommodation ensures that it cannot happen. In other words, a truth-accommodating refinement guarantees that learning from data is always valuable relative to receiving the ex-ante certainty equivalent payoff, providing an additional rationale for adopting revision rules that accommodate the truth.

As a final remark, all results in this section also apply when comparing two data-revised sets, say $\mathcal{P}_{1}(\omega^{N})$ and $\mathcal{P}_{2}(\omega^{N})$. When both sets accommodate the truth and $\mathcal{P}_{1}(\omega^{N})$ is a refinement of $\mathcal{P}_{2}(\omega^{N})$, all conclusions about objective improvement holds by viewing $\mathcal{P}_{1}(\omega^{N})$ as a truth-accommodating refinement of $\mathcal{P}_{2}(\omega^{N})$.

\section{Revision Rules that Accommodate the Truth}\label{revision}
As established in the previous section, for improving robust decisions, it is necessary and sometimes sufficient to accommodate the true DGP in revising the initial set (the refinement part only guarantees it to be sometimes strict). Define a \emph{revision rule} as the mapping from an initial set $\mathcal{P}$ and sample data $\omega^{N}$ to a data-revised set $\mathcal{P}(\omega^{N})$. When the sample size is unbounded, the following definition formalizes a notion of accommodating the truth in the asymptotic sense. 

\begin{definition}\label{def_asymptotic}
	A revision rule \textbf{accommodates the truth asymptotically almost surely} if, for any initial set $\mathcal{P}$, for every $P^{*} \in \mathcal{P}$, and for $P^{*}$-almost every $\omega \in \Omega$, there exists $\bar{N}(\omega)$ such that, for all $N \geq \bar{N}(\omega)$, the data-revised set $\mathcal{P}(\omega^{N})$ accommodates the DGP $P^{*}$. 
\end{definition}

This definition says that, regardless of which possible DGP governs the uncertainty, the revision rule ensures that, with a sufficient amount of data, the data-revised set accommodates the true DGP almost surely. Therefore, whenever accommodating the truth is sufficient for objective improvement, such improvements are also achieved asymptotically almost surely.

Likelihood-based rules have been shown in the introductory example to violate this property. Here, I propose a revision rule based on empirical distributions: For any sample data $\omega^{N}$, let $\boldsymbol\Phi(\omega^{N}) \in \Delta(S)$ denote the \emph{empirical distribution}, i.e., for any outcome $s \in S$, $\boldsymbol\Phi(\omega^{N})(s)  \equiv  N^{-1}\sum_{i=1}^{N} I\{\omega_{i} = s\}.$ For any $P \in \Delta_{indep}(\Omega)$ and for any $N$, the \emph{average of sample marginals}, $\bar{P}^{N} \in \Delta(S)$, is defined to be the distribution over $S$ given by the average mixture of the marginal distributions over each component of the sample states, i.e., $\bar{P}^{N} \equiv N^{-1}\sum_{i=1}^{N}P_{i}$. For any $p,q\in \Delta(S)$, let $\rho(p,q)$ denote the sup-norm distance.

\begin{definition}\label{def_empirical_distribution}
	The data-revised sets are obtained by the \textbf{empirical distribution method} if, for some pre-specified $\epsilon > 0$ and for every $\omega^{N}$, 
	\begin{equation*}
		\mathcal{P}(\omega^{N})  = \left\{P\in \mathcal{P}: \rho (\bar{P}^{N}, \boldsymbol\Phi(\omega^{N}) ) < \epsilon \right\}.
	\end{equation*}
\end{definition}

The empirical distribution method is a formalization of the simple heuristic of retaining a DGP if its average of sample marginals is close enough to the empirical distribution. Such a heuristic can be used when DGPs are i.i.d., for it happens to coincide with retaining DGPs that maximize the likelihood in this special case. With possible non-identical DGPs, while maximum likelihood is no longer useful, this heuristic remains valid. 

\begin{theorem}\label{thm_asymptotic}
	For all $\epsilon > 0$, the empirical distribution method accommodates the truth asymptotically almost surely. 
\end{theorem}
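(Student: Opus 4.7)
The plan is to reduce the theorem to coordinate-wise convergence of empirical frequencies via Kolmogorov's Strong Law of Large Numbers (SLLN) for independent but not necessarily identically distributed bounded random variables, and then use the finiteness of $S$ to pass to uniform convergence.

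Fix an initial set $\mathcal{P}$ and any $P^{*}\in\mathcal{P}$. For each outcome $s_{j}\in S$, define the Bernoulli variable $X_{i}^{j}(\omega)\equiv I\{\omega_{i}=s_{j}\}$. Under $P^{*}$, the sequence $\{X_{i}^{j}\}_{i\in\N}$ is independent by the independence assumption on DGPs, and each variable is bounded in $[0,1]$. Kolmogorov's variance condition $\sum_{i=1}^{\infty}\operatorname{Var}_{P^{*}}(X_{i}^{j})/i^{2}<\infty$ is then trivially satisfied (each variance is at most $1/4$), so his SLLN yields
$$\frac{1}{N}\sum_{i=1}^{N}\bigl(X_{i}^{j}-E_{P^{*}}[X_{i}^{j}]\bigr)\ \longrightarrow\ 0\qquad P^{*}\text{-a.s.}$$
The first term equals the empirical frequency $\boldsymbol\Phi(\omega^{N})(s_{j})$ and the second equals the average of sample marginals $\bar{P}^{*N}(s_{j})$, giving $\boldsymbol\Phi(\omega^{N})(s_{j})-\bar{P}^{*N}(s_{j})\to 0$ almost surely for each $j$.

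Because $S$ is finite, a finite union of $P^{*}$-null sets remains $P^{*}$-null. Hence for $P^{*}$-almost every $\omega$, the sup-norm distance $\rho(\boldsymbol\Phi(\omega^{N}),\bar{P}^{*N})=\max_{j}|\boldsymbol\Phi(\omega^{N})(s_{j})-\bar{P}^{*N}(s_{j})|$ tends to zero. For any pre-specified $\epsilon>0$, there then exists some $\bar{N}(\omega)$ for which $\rho(\boldsymbol\Phi(\omega^{N}),\bar{P}^{*N})<\epsilon$ at every $N\geq \bar{N}(\omega)$; by Definition \ref{def_empirical_distribution} this means $P^{*}\in\mathcal{P}(\omega^{N})$, which trivially implies the data-revised set accommodates $P^{*}$ since $P^{*\infty}_{N}\in\mathcal{P}(\omega^{N})^{\infty}_{N}\subseteq\overline{co}(\mathcal{P}(\omega^{N})^{\infty}_{N})$.

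I anticipate no serious obstacle here---the real conceptual work was already done in isolating the comparison between the empirical distribution and the average of marginals as the correct object of interest. The one point to check carefully is that the appropriate SLLN is Kolmogorov's version for independent non-identical sequences rather than an i.i.d.\ law; uniform boundedness $|X_{i}^{j}|\leq 1$ makes this immediate, and in particular no full-support hypothesis on $\mathcal{P}$ is invoked (that assumption presumably underpins the Central Limit Theorem needed later for Theorem \ref{thm_confident}, where the bounded-sample rate of convergence becomes essential).
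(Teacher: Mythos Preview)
Your proposal is correct and mirrors the paper's own proof almost exactly: both apply Kolmogorov's SLLN coordinate-wise to the indicators $I\{\omega_{i}=s_{j}\}$, verify the variance condition via boundedness, and then exploit the finiteness of $S$ to pass to sup-norm convergence and conclude $P^{*}\in\mathcal{P}(\omega^{N})$ eventually. Your closing remark that the full-support assumption is not needed here (and is instead used for the CLT behind Theorem~\ref{thm_confident}) is accurate.
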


The proof of Theorem \ref{thm_asymptotic} is to verify that Kolmogorov's strong law of large numbers holds. This also suggests that the independence assumption is not crucial for this result. As long as there is a version of the strong law of large numbers, one can obtain the same conclusion. Notable cases include when the DGPs satisfy Markov property or are weakly dependent \citep{dejong_1995}. 

In addition, the empirical distribution method is not the only revision rule that accommodates the truth asymptotically almost surely. For instance, when the odd and even experiments are known to have different characteristics, one may apply the empirical distribution method separately to the odd and even experiments to obtain a potentially more refined data-revised set. Such revision rules, however, typically need to be tailored to the specific structure in a case-by-case manner. In contrast, the empirical distribution method stands out for its simplicity and general applicability. 

\subsection{Accommodating the Truth with Finite Sample}
As Theorem \ref{thm_asymptotic} holds for all $\epsilon > 0$, letting $\epsilon \rightarrow 0$ obtains the theoretic limit of the data-revised sets under the empirical distribution method. For applications with finite samples, the standard approach is to derive the $\epsilon$-bound as a function of the sample size that ensures a pre-specified asymptotic probability of accommodating the truth.\footnote{With a finite sample, the only revision rule that accommodates the truth almost surely is to keep using the initial set, because of the full-support assumption. Thus, the only meaningful notion of accommodating the truth with a finite sample is the probabilistic approach, which is also a standard practice in statistical inferences.} This section develops a novel and simple method to achieve this. First, define the following finite-sample notion of accommodating the truth.

\begin{definition}\label{def_confidence_level}
	A revision rule \textbf{accommodates the truth with an asymptotic level $1-\alpha$} if, for any initial set $\mathcal{P}$ and for every $P^{*} \in \mathcal{P}$, 
	\begin{equation*}
		\liminf_{N \rightarrow \infty} P^{*} (\{\omega^{N}: \mathcal{P}(\omega^{N}) \text{ accommodates } P^{*}\}) \geq 1-\alpha.
	\end{equation*}
\end{definition}

A revision rule satisfying Definition \ref{def_confidence_level} ensures that, regardless of which DGP governs the data, the data-revised set accommodates the true DGP with asymptotic probability at least $1-\alpha$. In particular, the level $1-\alpha$ is understood uniformly --- the probability bound holds uniformly over all possible DGPs in $\mathcal{P}$. Consequently, objective improvement is guaranteed with at least the same asymptotic probability whenever accommodating the truth is sufficient.

Definition \ref{def_confidence_level} effectively requires the data-revised sets to serve as consistent \textit{confidence regions} for the true DGP.\footnote{The coverage is in the weaker sense of accommodating but the difference is unimportant.} One way to ensure this property is to construct the confidence regions directly and use them as the data-revised sets. 

Constructing confidence regions is theoretically straightforward exploiting the well-known duality with hypothesis tests. For any $P \in \mathcal{P}$, consider testing the null hypothesis $P^{*} = P$, against the unrestricted alternative $P^{*} \neq P$. Let $A_{N, \alpha}(P) \subseteq S^{N}$ denote the \emph{region of acceptance}: the set of sample data under which the null cannot be rejected. Then find regions that satisfy 
\begin{equation*}
	\liminf_{N \rightarrow \infty} P(A_{\alpha,N}(P)) \geq 1-\alpha, 
\end{equation*}
that is, the probability of accepting the null when it is true is at least $1-\alpha$ asymptotically. Given any data $\omega^{N}$, construct the data-revised set as 
\begin{equation*}
	\mathcal{P}(\omega^{N}) = \{P \in \mathcal{P}: \omega^{N} \in A_{\alpha, N}(P)\}.
\end{equation*}
Such data-revised sets contain the true DGP with asymptotic probability $1-\alpha$, uniformly across all $P$ in $\mathcal{P}$.

When all possible DGPs are i.i.d., this construction is standard and tractable for two convenient features. First, by the central limit theorem, the regions of acceptance can be obtained from probability contours of the corresponding multivariate Gaussian approximations. The mean vectors and covariance matrices depend only on the unique marginal distribution and thus remain fixed as the sample size grows. Second, because each i.i.d. distribution is uniquely determined by its marginal, the number of tests remains fixed regardless of the sample size.  	

Both features do not carry over when DGPs may be non-identical. First, in this case, both the mean vectors and covariance matrices depend on all marginals across sample states. As a result, for every sample size, both need to be recomputed even for the same DGP. Second, a non-identical DGP is determined by all its marginals, so each additional observation increases the number of tests. 

To overcome these difficulties, I present in the following a novel method for constructing confidence regions for non-identical DGPs. This method retains the simplicity of the i.i.d. case while guaranteeing that the resulting confidence regions cover the true non-identical DGP with at least the required asymptotic level. 

Formally, for any $p \in \Delta(S)$, let $p^{\infty}$ denote the i.i.d. distribution over $\Omega$ with marginal $p$. Let $A^{*}_{N,\alpha} (p^{\infty})$ denote its region of acceptance with asymptotic level $1-\alpha$, constructed using the corresponding Gaussian approximation.\footnote{The exact form is standard and is given by equation \eqref{equ_iid_confidence_region} in the appendix, with additional notations.} For any $P \in \Delta_{indep}(\Omega)$, recall $\bar{P}^{N} \in \Delta(S)$ denotes the average of sample marginals. Let $A^{*}_{N, \alpha}((\bar{P}^{N})^{\infty})$ denote the region of acceptance constructed using the Gaussian approximation of the i.i.d. distribution $(\bar{P}^{N})^{\infty}$. Consider the following revision rule: 

\begin{definition}\label{def_robust_iid}
	The data-revised sets are obtained using the \textbf{augmented i.i.d. test} with asymptotic level $1-\alpha$ if, for every $\omega^{N}$, 
	\begin{equation*}
		\mathcal{P}(\omega^{N}) = \left\{P \in \mathcal{P}: \omega^{N} \in A^{*}_{N,\alpha}((\bar{P}^{N})^{\infty}) \right\}.
	\end{equation*}
\end{definition}

Intuitively, the augmented i.i.d. test follows a two-step procedure: 
\begin{enumerate}[(i)]
	\item For any sample data $\omega^{N}$, construct a confidence region as if the initial set consists of all i.i.d. DGPs. 
	\item For each DGP in the initial set, retain it in the data-revised set if its average of sample marginals coincides with the marginal of some i.i.d. DGP in the previous confidence region. 
\end{enumerate}

Notice the first step is the standard procedure for constructing confidence regions from i.i.d. sample data. The essential departure is the second step, which \textit{augments} the i.i.d. confidence region by also including the possible non-identical DGPs. Importantly, this augmentation is achieved through a straightforward comparison, adding virtually no computational difficulty. Therefore, implementing the augmented i.i.d. test is as tractable as conventional statistical inferences based on i.i.d. samples.

\begin{theorem}\label{thm_confident}
	The augmented i.i.d. test with asymptotic level $1-\alpha$ accommodates the truth with the same asymptotic level. 
\end{theorem}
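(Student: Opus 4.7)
The proof plan is to reduce the accommodation of $P^{*}$ to the coverage of an i.i.d.\ confidence region centered at $\overline{P^{*}}^{N}$, then to exploit a positive semi-definite inequality showing that the empirical distribution under the true (non-identical) DGP is asymptotically more concentrated than under the i.i.d.\ approximation used to build that region.

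First, I would observe that $P^{*} \in \mathcal{P}(\omega^{N})$ implies $\mathcal{P}(\omega^{N})$ accommodates $P^{*}$, and by the definition of the augmented i.i.d.\ test, $P^{*} \in \mathcal{P}(\omega^{N})$ is equivalent to $\omega^{N} \in A^{*}_{N,\alpha}((\overline{P^{*}}^{N})^{\infty})$, since $\overline{P^{*}}^{N}$ is the average of sample marginals of $P^{*}$. Thus it suffices to establish
\[
\liminf_{N \to \infty} P^{*}\!\left( \omega^{N} \in A^{*}_{N,\alpha}((\overline{P^{*}}^{N})^{\infty}) \right) \geq 1-\alpha.
\]

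The standard construction of $A^{*}_{N,\alpha}(p^{\infty})$ in the appendix takes the form $\{\omega^{N}: T_{N}(\omega^{N}, p) \leq c_{1-\alpha}\}$ with $T_{N}(\omega^{N}, p) = N(\boldsymbol\Phi(\omega^{N}) - p)^{\top}\Sigma(p)^{\dagger}(\boldsymbol\Phi(\omega^{N}) - p)$, where $\Sigma(p)$ is the multinomial covariance with parameter $p$ and $c_{1-\alpha} = \chi^{2}_{d-1, 1-\alpha}$. The crucial step is the positive semi-definite inequality
\[
\Sigma(\overline{P^{*}}^{N}) \succeq N^{-1}\sum_{i=1}^{N}\Sigma(P^{*}_{i}),
\]
which I would verify directly: for any vector $v$, the difference equals $N^{-1}\sum_{i}(v^{\top}P^{*}_{i})^{2} - (v^{\top}\overline{P^{*}}^{N})^{2}$, nonnegative by Jensen's inequality. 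Under the true DGP, the Lindeberg-Feller CLT applied to the bounded independent indicator vectors gives that $\sqrt{N}(\boldsymbol\Phi(\omega^{N}) - \overline{P^{*}}^{N})$ is asymptotically normal with covariance $N^{-1}\sum_{i}\Sigma(P^{*}_{i})$; combined with the PSD inequality, the statistic $T_{N}(\omega^{N}, \overline{P^{*}}^{N})$ is therefore asymptotically stochastically dominated by a $\chi^{2}_{d-1}$ variable, yielding the required $\liminf \geq 1-\alpha$.

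The main obstacle is that the marginals $P^{*}_{i}$ may vary arbitrarily, so neither $\overline{P^{*}}^{N}$ nor the covariance matrices need converge along the full sequence of sample sizes, complicating a direct appeal to the CLT. I would resolve this by a subsequence-compactness argument: $\Delta(S)$ is compact and the covariances are uniformly bounded, so every subsequence admits a further convergent subsequence on which the CLT and stochastic-dominance argument deliver a $\liminf$ of at least $1-\alpha$; since this bound holds on every convergent subsequence, the full $\liminf$ inherits it. A secondary technical issue is the rank-deficiency of $\Sigma(p)$ on the simplex, which I would handle uniformly by restricting to the orthogonal complement of the all-ones vector and working with pseudo-inverses throughout.
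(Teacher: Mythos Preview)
Your proposal is correct and follows essentially the same route as the paper: both hinge on the positive semi-definite inequality $\Sigma(\overline{P^{*}}^{N}) \succeq N^{-1}\sum_{i=1}^{N}\Sigma(P^{*}_{i})$ (the paper phrases it as the difference being a Gram matrix, you use Jensen) combined with a CLT for the independent non-identical multinomial indicators. The only notable technical difference is that the paper sidesteps your subsequence-compactness workaround by pre-normalizing with $\boldsymbol{\bar{\Sigma}_{N}}^{-1/2}$ inside a triangular-array Liapounov CLT---using the full-support and compactness assumptions on $\mathcal{P}$ to bound the eigenvalues of $\boldsymbol{\bar{\Sigma}_{N}}$ uniformly away from zero---so that the normalized statistic converges to $\chi^{2}_{d-1}$ directly, and it works in $\R^{d-1}$ by dropping one coordinate rather than carrying pseudo-inverses.
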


The proof of this theorem relies on a key observation: For all $P \in \Delta_{indep}(\Omega)$, for all $N$ and  $\alpha$, 
\begin{equation*}
	A^{*}_{N, \alpha}(P) \subseteq A^{*}_{N,\alpha}((\bar{P}^{N})^{\infty}), 
\end{equation*}
which further implies that
\begin{equation*}
	P(A^{*}_{N, \alpha}(P))  \leq P(A^{*}_{N,\alpha}((\bar{P}^{N})^{\infty})).
\end{equation*}

Therefore, when testing the null hypothesis $P^{*} = P$, using the region of acceptance $A^{*}_{N,\alpha}((\bar{P}^{N})^{\infty})$, the probability of accepting $P$ when it is true is at least greater than the probability when using $A^{*}_{N,\alpha}(P)$. The latter probability is, by construction, asymptotically greater than $1-\alpha$. 

The key relation, $A^{*}_{N, \alpha}(P) \subseteq A^{*}_{N,\alpha}((\bar{P}^{N})^{\infty})$, is shown in Lemma \ref{lem2} by deriving a result relating the average covariance matrices of the two distributions, $P$ and $(\bar{P}^{N})^{\infty}$. Specifically, subtracting the average covariance matrix of $P_{N}$ from that of $((\bar{P}^{N})^{\infty})_{N}$ yields a positive semi-definite matrix. This result generalizes a well-known variance inequality for mixtures of binomial distributions \citep{wang1993} to the multinomial case.\footnote{Specifically, the average variance of an i.i.d. binomial distribution is always weakly greater than the average variance of a non-identical binomial distribution whose average mean is the same as the i.i.d. distribution.} 

\begin{remark} One potential concern is that computing the probability contours of multivariate Gaussian distributions may be difficult when $|S|$ is large. A practical alternative is to construct a Bonferroni-type confidence region by forming confidence intervals for the probability of each outcome $s$, each with confidence level $1-\alpha/(|S|-1)$. The intersection of all such confidence intervals then yields a confidence region with overall confidence level $1-\alpha$. However, this region is generally more conservative than that obtained directly from the multivariate Gaussian distribution. Moreover, using the corresponding result in \citet{wang1993} for binomial distributions, one can show that the Bonferroni-type confidence region constructed from i.i.d. distributions guarantees at least the same coverage probability for non-identical distributions.
\end{remark}

\section{Applications with Parametric Models}\label{application}
This section illustrates how the theoretical results translate into familiar statistical and economic settings. In particular, it studies two applications where the initial sets are given by specific parametric models. The first application showcases a setting where the data-revised sets obtained under the augmented i.i.d. test have a closed-form solution, and are closely related to the standard Wilson confidence interval. The second application highlights new findings in a commonly studied model of learning under ambiguity. These applications confirm the practical relevance of the proposed revision rules and the associated theoretical results.

\subsection{Bernoulli Model with Ambiguous Nuisance Parameters}\label{Bernoulli}
This model is a generalization of the one studied in \cite{Walley1991-WALSRW}. Suppose the DM faces a sequence of coin flips, with outcome space $\Omega = \{H, T\}^{\infty}$, representing heads and tails. The probability of getting a head from the $i$-th coin flip is determined by both a \textit{structural parameter} $\theta \in [0,1]$ and a \textit{nuisance parameter} $\psi_{i} \in [0,1]$: 
\begin{equation*}
	(1-\delta)\theta + \delta \psi_{i}  \in [0,1], 
\end{equation*}
for some fixed $\delta \in [0,1]$. The structural parameter is common across all flips and can be interpreted as a systematic characteristic of the coins. But each coin flip is also affected by its idiosyncratic feature captured by $\psi_{i}$. Throughout, I use the probability of $H$ to represent a probability distribution over $\{H, T\}$. Because each $\psi_{i}$ is only known to lie in $[0,1]$, each structural parameter $\theta$ corresponds to a set of possible DGPs: 
\begin{equation*}
	\mathcal{P}_{\theta} = \{P \in \Delta_{indep}(\Omega): P_{i} \in [(1-\delta)\theta, (1-\delta)\theta + \delta], \forall i\}.
\end{equation*}
Let $\Theta = [0,1]$ denote the set of structural parameters. The initial set of DGPs is therefore $\mathcal{P} = \cup_{\theta \in \Theta} \mathcal{P}_{\theta}$. The DM observes outcomes of $N$ coin flips. Their goal is to estimate the true structural parameter and make a set-valued prediction for the probability of getting a head in the next flip. The benchmark estimate is simply the initial set of parameters $\Theta = [0,1]$, so the benchmark prediction is $\mathcal{P}_{N+1} = [0,1]$. 

Consider the DM's asymptotic prediction using the empirical distribution method. For simplicity, I ignore the pre-specified $\epsilon$ by taking it to be arbitrarily small. Then the data-revised set is given by
\begin{equation*}
	\mathcal{P}(\omega^{N}) = \left\{P \in \mathcal{P}: N^{-1} \sum\limits_{i=1}^{N}P_{i} = \Phi(\omega^{N}) \right\}.
\end{equation*}
Let the DM's data-revised estimate of the structural parameter be given by
\begin{equation*}
	\Theta(\omega^{N}) \equiv \{\theta \in \Theta: \mathcal{P}_{\theta} \cap \mathcal{P}(\omega^{N}) \neq \emptyset \},
\end{equation*}
i.e., a structural parameter is considered possible whenever there is a corresponding DGP retained in $\mathcal{P}(\omega^{N})$. For any $\theta \in \Theta$, there exists $P \in \mathcal{P}_{\theta}$ satisfying the above condition if and only if $\Phi(\omega^{N}) \in \left[(1-\delta)\theta, (1-\delta)\theta + \delta \right]$.

As a result, it follows that
\begin{equation*}
	\Theta(\omega^{N})  = \left[ \max \left\{ \frac{\Phi(\omega^{N}) - \delta}{1-\delta} , 0 \right\} , \min \left\{ \frac{\Phi(\omega^{N}) }{1-\delta}, 1\right\} \right].
\end{equation*}
Notice the data-revised prediction is also completely determined by the data-revised estimate of the structural parameter and is given by
\begin{equation*}
	\mathcal{P}(\omega^{N})_{N+1} = \left[ \max \left\{ \Phi(\omega^{N}) - \delta , 0 \right\} , \min \left\{ \Phi(\omega^{N}) + \delta, 1\right\} \right].
\end{equation*}
Intuitively, the asymptotic prediction under the empirical distribution method is a ``$\delta$-fattening'' of the observed empirical frequency of heads. 

Next, consider the finite-sample estimate and prediction at asymptotic level $1-\alpha$ using the augmented i.i.d. test. For any sample data $\omega^{N}$, the first step is to construct the confidence interval as if the underlying DGPs were i.i.d. binomial distributions. Specifically, the corresponding confidence interval is the \textit{Wilson Interval}.\footnote{Different from the probably more famous Wald Interval which uses the sample variance, Wilson Interval is constructed by directly inverting the statistical tests, thus using the null variance. Wilson Interval has considerably better asymptotic performance than the Wald Interval. See \citet{Brown2001} for a discussion.} Let $z_{\alpha/2}$ denote the upper $100(\alpha/2) \%$ quantile of the standard normal distribution. Let $[\underline{W}(\omega^{N}), \overline{W}(\omega^{N})]$ denote the Wilson Interval which has the following closed-form expressions: 
\begin{align*}
	& \overline{W}(\omega^{N}) = \frac{N \Phi(\omega^{N}) + z_{\alpha/2}^{2}/2}{N + z_{\alpha/2}^{2}} + \frac{z_{\alpha/2} N^{1/2}}{N + z_{\alpha/2}^{2}} \left( \Phi(\omega^{N})(1- \Phi(\omega^{N})) + z_{\alpha/2}^{2}/(4N) \right)^{1/2};\\
	& \underline{W}(\omega^{N}) = \frac{N \Phi(\omega^{N}) + z_{\alpha/2}^{2}/2}{N + z_{\alpha/2}^{2}} - \frac{z_{\alpha/2} N^{1/2}}{N + z_{\alpha/2}^{2}} \left( \Phi(\omega^{N})(1- \Phi(\omega^{N})) + z_{\alpha/2}^{2}/(4N) \right)^{1/2}.
\end{align*}

Given the i.i.d. confidence interval, the second step is to consider non-identical DGPs whose average of sample marginals falls into this confidence interval: A structural parameter $\theta$ is retained in the data-revised estimate if and only if $[(1-\delta)\theta, (1-\delta)\theta + \delta] \cap [\underline{W}(\omega^{N}), \overline{W}(\omega^{N})] \neq \emptyset$. Hence, the data-revised estimate is
\begin{equation*}
	\Theta(\omega^{N})  = \left[ \max \left\{ \frac{\underline{W}(\omega^{N}) - \delta}{1-\delta} , 0 \right\} , \min \left\{ \frac{\overline{W}(\omega^{N})}{1-\delta}, 1\right\} \right].
\end{equation*}
Similarly, the data-revised prediction in this case is 
\begin{equation*}
	\mathcal{P}(\omega^{N})_{N+1} = \left[ \max \left\{ \underline{W}(\omega^{N}) - \delta , 0 \right\} , \min \left\{\overline{W}(\omega^{N}) + \delta, 1\right\} \right].
\end{equation*}
Notice the prediction is again a $\delta$-fattening of the Wilson Interval. Therefore, The resulting expressions are not only analytically tractable but also intuitively interpretable.

\subsection{Gaussian Signals with Ambiguous Variances}\label{Gaussian}
Prior-by-prior or full-Bayesian updating is the most commonly used update rule in models of learning under ambiguity in the literature. However, its asymptotic result is often hard to derive and is known only in specific parametric models. This section revisits one such model from \cite{reshidi2020information} and shows that applying the empirical distribution method yields a simpler analysis and entirely different conclusions.

In this model, the DM aims to learn the state of the world $\theta \in \Theta \equiv \R$ by observing a countably infinite sequence of signals denoted by $\{x_{i}\}_{i=1}^{\infty}$. Each $x_{i}$ is a Gaussian random variable with mean $\theta$ and variance $\sigma_{i}^{2}$, and let $g_{i}(\theta, \sigma_{i})$ denote its probability density function. The signals are mutually independent, but each $\sigma_{i}$ is only known to lie in $[\underline{\sigma}, \overline{\sigma}]$. Thus, the DM observes a sequence of independent but possibly heterogeneous Gaussian random variables. 

Formally, for each state $\theta$, let 
\begin{equation*}
	\mathcal{P}_{\theta} = \left\{\prod_{i=1}^{\infty} g_{i}(\theta, \sigma_{i}) : \sigma_{i} \in [\underline{\sigma}, \overline{\sigma}] , \forall i \right\}
\end{equation*}
denote the set of possible DGPs over the signal sequence. The initial set is $\mathcal{P} = \cup_{\theta \in \Theta} \mathcal{P}_{\theta}.$ For every $N \in \N$, let $\hat{x}^{N} \equiv (\hat{x}_{1}, \hat{x}_{2}, \cdots, \hat{x}_{N})$ be a sequence of signal realizations and let $\mathcal{P}(\hat{x}^{N})$ denote the data-revised set of DGPs. Because the DM's goal is to learn the true state, define
\begin{equation*}
	\Theta(\hat{x}^{N} )\equiv  \{\theta \in \Theta: \mathcal{P}_{\theta} \cap \mathcal{P}(\hat{x}^{N} ) \neq \emptyset \}
\end{equation*}
as the set of states consistent with the data-revised set of DGPs.

Directly applying full Bayesian updating here would amount to retain all possible DGPs, implying $\mathcal{P}(\hat{x}^{N} ) \equiv \mathcal{P}$ and hence $\Theta(\hat{x}^{N}) \equiv \Theta$ for all $\hat{x}^{N}$. In \citet{reshidi2020information}, they apply full Bayesian updating differently by assuming a prior $\mu$ over $\Theta$.  Then applying full Bayesian updating is to apply Bayes' rule to update $\mu$ under each possible DGP. Specifically, for each $\theta$, let $P_{\theta} \in \mathcal{P}_{\theta}$ denote a specific DGP, the posterior probability 
\begin{equation*}
	\mu(\theta | \hat{x}^{N}) = \frac{\mu(\theta) P_{\theta} (\hat{x}^{N})}{\int_{\Theta}\mu(\theta') P_{\theta'} (\hat{x}^{N})d\mu(\theta')}.
\end{equation*}	
This yields a set of posterior beliefs over $\Theta$.\footnote{This way of applying full Bayesian updating can be incorporated into the present framework by letting the state space be $\Theta \times S^{\infty}$ and allowing dependence in DGPs. Applying \eqref{equ_conditional} in Appendix \ref{apx: dependent} yields exactly the same posterior distribution over $\Theta$.} Their main result (Theorem 1) shows that in any state $\theta$ and some possible DGP, the set of posteriors converges almost surely to a set of degenerate distributions over a non-singleton set of states. That is, ambiguity does not vanish asymptotically.

Consider revising the initial set using the empirical distribution method. Because only the mean matters, it suffices to consider their sample mean. 
\begin{definition}
	The revision of states is obtained by the \textbf{sample mean method} if, for some pre-specified $\epsilon > 0$, and for every $\hat{x}^{N}$, 
	\begin{equation*}
		\Theta(\hat{x}^{N} ) = \left\{\theta \in \Theta: \left|N^{-1}\sum\limits_{i=1}^{N}\hat{x}_{i} - \theta \right| < \epsilon \right\}.
	\end{equation*}
\end{definition}
As in the empirical distribution method, $\Theta(\hat{x}^{N} )$ retains a state if it is close enough to the sample mean. Because the mean of each marginal distribution equal to $\theta$, applying Kolmogorov's strong law of large numbers yields the following result. 

\begin{proposition}\label{prop_gaussian}
	For any $\theta^{*} \in \Theta$ and any $P^{*} \in \mathcal{P}_{\theta^{*}}$, the revision of states obtained by the sample mean method with any $\epsilon > 0$ contains the true state asymptotically almost surely, i.e., for any $\epsilon > 0$, 
	\begin{equation*}
		P^{*}\left(\hat{x}: \lim\limits_{N \rightarrow \infty} \theta^{*} \in \Theta(\hat{x}^{N})\right) = 1.
	\end{equation*}
\end{proposition}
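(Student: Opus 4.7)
The plan is to reduce the statement to a strong law of large numbers (SLLN) for independent but non-identically distributed random variables. Observe that by the definition of the sample mean method, $\theta^{*} \in \Theta(\hat{x}^{N})$ is equivalent to $|N^{-1}\sum_{i=1}^{N}\hat{x}_{i} - \theta^{*}| < \epsilon$. Hence the event $\{\hat{x}: \theta^{*} \in \Theta(\hat{x}^{N}) \text{ eventually}\}$ coincides with the event that the sample average $N^{-1}\sum_{i=1}^{N}\hat{x}_{i}$ enters and stays in the $\epsilon$-neighborhood of $\theta^{*}$. So the proposition reduces to showing that, under any $P^{*} \in \mathcal{P}_{\theta^{*}}$, $N^{-1}\sum_{i=1}^{N}\hat{x}_{i} \to \theta^{*}$ $P^{*}$-almost surely.

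Next I would invoke Kolmogorov's SLLN for independent random variables. Under $P^{*}$, the signals $\{\hat{x}_{i}\}$ are mutually independent with $\E[\hat{x}_{i}] = \theta^{*}$ and $\mathrm{Var}(\hat{x}_{i}) = \sigma_{i}^{2} \in [\underline{\sigma}^{2}, \overline{\sigma}^{2}]$. Since the variances are uniformly bounded by $\overline{\sigma}^{2}$, Kolmogorov's criterion $\sum_{i=1}^{\infty}\mathrm{Var}(\hat{x}_{i})/i^{2} \leq \overline{\sigma}^{2}\sum_{i=1}^{\infty}1/i^{2} < \infty$ is satisfied. This gives $N^{-1}\sum_{i=1}^{N}(\hat{x}_{i}-\theta^{*}) \to 0$ almost surely under $P^{*}$, which is exactly the required conclusion.

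To finish, I would note that the $P^{*}$-null set on which the SLLN fails can be removed, and for every sample path outside this null set, there exists $\bar{N}(\hat{x})$ such that for all $N \geq \bar{N}(\hat{x})$ one has $|N^{-1}\sum_{i=1}^{N}\hat{x}_{i} - \theta^{*}| < \epsilon$, i.e., $\theta^{*} \in \Theta(\hat{x}^{N})$. This is the same mechanism underlying Theorem \ref{thm_asymptotic}, specialized to the Gaussian-location parameter structure.

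There is no real obstacle here; the only point that requires care is ensuring that the SLLN applies without identical distribution, and this is immediate from the uniform bound $\sigma_{i}^{2} \leq \overline{\sigma}^{2}$ guaranteeing Kolmogorov's variance condition. The result is therefore essentially a one-line consequence of independence plus bounded variances, and contrasts sharply with the non-vanishing ambiguity obtained under full Bayesian updating in \citet*{reshidi2020information} precisely because the sample mean method conditions on a statistic (the empirical mean) that is invariant to the nuisance parameters $\{\sigma_{i}\}$.
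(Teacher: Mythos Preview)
Your proposal is correct and follows essentially the same approach as the paper: the paper's proof is the one-line observation that Kolmogorov's strong law of large numbers applies under any possible DGP, so the sample mean converges almost surely to the average mean $\theta^{*}$. You have simply spelled out the verification of Kolmogorov's variance criterion $\sum_{i}\sigma_{i}^{2}/i^{2} \leq \overline{\sigma}^{2}\sum_{i}1/i^{2} < \infty$ that the paper leaves implicit.
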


As the conclusion holds for any $\epsilon > 0$, taking $\epsilon$ arbitrarily small makes the revised set of states arbitrarily precise. Thus, even with Gaussian signals that have unknown and possibly heterogeneous variances, the true state can still be identified asymptotically almost surely. This stands in sharp contrast to full Bayesian updating, under which ambiguity persists. 

The key factor enabling asymptotic identification here is that all Gaussian signals share the same mean. When the means themselves are also ambiguous, the sample-mean method may still yield asymptotic ambiguity over a non-vanishing set of states. Nonetheless, the key here is that this asymptotic prediction is obtained straightforwardly with the sample-mean method, whereas the corresponding asymptotic result for full-Bayesian updating in this setting remains unknown.

\section{Related Literature}\label{literature}
The decision environment formulated in this paper is closely related to some in the literature on decisions under ambiguity. In particular, it directly generalizes the setting studied by \cite{Epstein2007}. They assume the DM applies maximum likelihood updating to revise the initial set. The present paper highlights possible concerns with this approach. \citet{Epstein2016} develop robust confidence regions when the possible data-generating processes are belief functions. Belief functions impose restrictions on the possible marginal distributions. In contrast, the environment studied here allows for arbitrary marginals. But the main conceptual difference from their paper and other papers on asymptotic learning under ambiguity, such as \cite{Marinacci2002} and \cite{MARINACCI2019144}, is that the present paper emphasizes implications for decision making in addition to asymptotic learning.

This paper also contributes to the literature on dynamic decisions under ambiguity by proposing new rules for revising or updating sets of distributions. See \citet{gilboa_marinacci_2013} and \citet{CHENG2022102587} for recent developments. The essential departure of the present paper from this literature is that it evaluates decisions using an objective criterion. The objective criterion leads to a characterization of accommodating-the-truth property, conceptually analogous to statistical consistency. In this way, this paper draws a connection between a classical concept from statistics and the theory of decisions under ambiguity, following the line of research by \citet{cerreia-vioglio2013} and \cite{denti2022}. This objective approach also resonates with some recent studies of misspecified learning that evaluate performance according to an objective measure, such as \citet{Frick2021} and \citet{he2020evolutionarily}.  

Finally, this paper develops a useful augmenting technique for making inferences in the presence of independent but non-identical distributions. In essence, this technique can be applied on top of standard statistical procedures developed under the i.i.d. assumption. The data-revised set naturally serves as a set-valued identification object and is therefore related to the partial identification literature (see surveys by  \citet{Tamer2010}, \citet{canay/shaikh:2017}, and \citet{MOLINARI2020355}). In that literature, the decision environment is typically formulated so that the data-generating distribution is point identified, while the payoff-relevant parameters are only partially identified, see, for example, \citet{christensen2023optimal}. By contrast, the present paper focuses on situations where the distribution itself is partially identified and develops relevant inference methods for such settings. 

\appendix
\numberwithin{equation}{section}
\numberwithin{definition}{section}
\numberwithin{theorem}{section}
\numberwithin{corollary}{section}
\numberwithin{proposition}{section}
\numberwithin{remark}{section}

\section{Proofs of Results}\label{sec: main_proofs}

For ease of exposition, unless otherwise specified, I use the notations $\mathcal{P}$, $\mathcal{P}(\omega^{N})$, and $P$ to represent $\overline{co}(\mathcal{P}){N}^{\infty}$, $\overline{co}(\mathcal{P}(\omega^{N}){N}^{\infty})$, and $P_{N}^{\infty}$, respectively. Accordingly, a truth-accommodating refinement is one that satisfies $P^{*}\in\mathcal{P}(\omega^{N})\subsetneqq\mathcal{P}$.

\subsection{Proof of Theorem \ref{thm: truth_accommodating_refinement_basic}}
\begin{proof}
Suppose $\mathcal{P}(\omega^{N})$ does not accommodate $P^{*}$, i.e., $P^{*} \notin \mathcal{P}(\omega^{N})$. By the strong separating hyperplane theorem (See, for example, Corollary 5.80 in \cite{aliprantis2006}), there exists a $\Sigma_{N}^{\infty}$-measurable bounded function $f: \Omega_{N} \rightarrow \R$ and a real number $x \in \R$ such that
	\begin{equation*}
	\int_{\Omega_{N}} f(\tilde{\omega}_{N}) dP^{*}(\tilde{\omega}_{N}) < x, \text{ and } 
	\min\limits_{P \in \mathcal{P}(\omega^{N})} \int_{\Omega_{N}} f(\tilde{\omega}_{N}) dP(\tilde{\omega}_{N}) > x.  
	\end{equation*}
    Consider the basic decision problem $D = \{f, x\}$, the above inequalities imply that $c(D) = x$, $c(D, \omega^{N}) = f$, but $W(f, P^{*}) < x = W(x, P^{*})$. 
\end{proof}

\subsection{Proof of Theorem \ref{thm_monotone}}
\begin{proof}
    The necessity is a direct consequence of Theorem \ref{thm: truth_accommodating_refinement_basic} as basic decision problems are monotone. The sufficiency is proved as follows.

    Fix any initial set $\mathcal{P}$. By definition, there exists $g \in \mathcal{F}$ such that for all $f \in D$, $f = \lambda_{f} g + c_{f}$ for some $\lambda_{f} \geq 0$ and $c_{f} \in \R$. Since for any $f \in D$ and any $P \in \Delta(\Omega)$,
    \begin{equation*}
        W(f,P) = \lambda_{f} W(g,P) + c_{f},
    \end{equation*}
	it follows that all $f \in D$ and $g$ share the same worst-case DGP that delivers the lowest $W(g,P)$. Based on this observation, further denote
    \begin{align*}
        &[a, b] \equiv \left[\min_{P \in \mathcal{P}} W(g,P), \max_{P \in \mathcal{P}} W(g,P) \right] \subseteq \R, \\
        &[a', b'] \equiv \left [\min_{P \in \mathcal{P}(\omega^{N}) } W(g,P), \max_{P \in \mathcal{P}(\omega^{N})} W(g,P) \right ]\subseteq \R,\\
        &p^{*} \equiv W(g, P^{*}) \in \R.
    \end{align*}  
    Then $\mathcal{P}(\omega^{N})$ is a truth-accommodating refinement of $\mathcal{P}$ implies that $p^{*} \in [a',b'] \subseteq [a,b]$. 

    Denote $c(D) = f_{1}$ and $c(D, \omega^{N}) = f_{2}$. If $f_{1} = f_{2}$, then the conclusion holds trivially. Otherwise, it must hold that 
    \begin{align*}
        &\lambda_{f_{1}} \cdot a + c_{f_{1}} \geq \lambda_{f_{2}} \cdot a + c_{f_{2}}, \\
        &\lambda_{f_{1}} \cdot a' + c_{f_{1}} < \lambda_{f_{2}} \cdot a' + c_{f_{2}}.
    \end{align*} 
    The second inequality must be strict, otherwise, since $f_{1}$ is a solution under the revised set, the tie-breaking assumption forces $c(D, \omega^{N}) = f_{1}$, contradicting $c(D, \omega^{N}) = f_{2} \neq f_{1}$. Strict inequality further implies that $a' > a$. Then, as $\lambda_{f_{1}}$ and $\lambda_{f_{2}}$ are both non-negative, combining both inequalities yields $\lambda_{f_{2}} > \lambda_{f_{1}}$. Thus, for any $p \geq a'$, it must hold that 
    \begin{equation*}
        \lambda_{f_{1}} \cdot p + c_{f_{1}} < \lambda_{f_{2}} \cdot p + c_{f_{2}}.
    \end{equation*}
    Therefore, as $p^{*} \geq a'$, it follows that $W(c(D, \omega^{N}), P^{*}) > W(c(D), P^{*})$. 
	
	To show the existence of a monotone decision problem where strict improvement occurs, one can apply a separating hyperplane argument to construct a basic decision problem with $f_{1} \neq f_{2}$.
    \end{proof}

\subsection{Proof of Theorem \ref{thm_monotone_converse}}
\begin{proof}
   \textbf{(i)} This follows from Theorem \ref{thm_impossible}. If $\mathcal{P}(\omega^{N})$ refines $\mathcal{P}$, there always exists $P^{*} \in \mathcal{P}(\omega^{N})$ (for example, those on the boundary) such that the condition in Theorem \ref{thm_impossible} fails. Then by Theorem \ref{thm_monotone}, the decision problem where objective improvement fails must be a non-monotone decision problem. 
   \bigskip 

   \textbf{(ii)} Let $D = \{f_{1}, f_{2}\} \in \mathcal{D}$ be a non-monotone decision problem with finitely-based acts. For some finite $M > N$, let $S_{N}^{M}$ denote the finite set of experiments where both acts are measurable and let $\omega_{N}^{M} = (\omega_{N+1}, \cdots, \omega_{M})$. For each $P \in \Delta_{indep}(S_{N}^{M})$, identify it as a vector $(P_{N+1}, \cdots, P_{M}) \in (\Delta(S))^{M-N} \subset \R^{(M - N) \times |S|}$. Notice the set $(\Delta(S))^{M-N}$ is convex and compact.\footnote{A convex combination in the space $(\Delta(S))^{M-N}$ is different from that in $\Delta_{indep}(S_{N}^{M})$. The former is a convex combination of the marginal distributions, while the latter is a convex combination of the joint distributions. Thus, this convexity does not contradict the fact that $\Delta_{indep}(S_{N}^{M})$ is not convex.} Next, for $j \in \{1, 2\}$, the function
   \begin{equation*}
    W(f_{j}, P) = \sum_{\omega_{N}^{M} \in S_{N}^{M}} \prod_{i=N+1}^{M} P_{i}(\omega_{i}) f_{j}(\omega_{N}^{M}),
   \end{equation*}
   is continuous over $(\Delta(S))^{M-N}$. Therefore, for each $x \in (\min_{\omega_{N}^{M}} f_{j}(\omega_{N}^{M}), \max_{\omega_{N}^{M}} f_{j}(\omega_{N}^{M}))$, there exists a full support $P \in \Delta_{indep}(S_{N}^{M})$ such that $W(f_{j}, P) = x$.

   $D$ is non-monotone implies both $f_{1}$ and $f_{2}$ are non-constant. Pick any $\hat{\omega}_{N} \in \Omega_{N}$ and define 
   \begin{equation*}
	 \hat{f}_{j}(\tilde{\omega}_{N}) = f_{j}(\tilde{\omega}_{N}) - f_{j}(\hat{\omega}_{N}), \text{for } j \in \{1,2\}. 
	\end{equation*}
	Observe that $D$ is monotone if and only if $\hat{f}_{1}$ and $\hat{f}_{2}$ are in the same ray, i.e., there exists $\alpha > 0$ such that $\hat{f}_{2} = \alpha \hat{f}_{1}$. Thus, when $D$ is non-monotone, there are two possible cases:
    \bigskip 

	\textbf{Case 1:} $\hat{f}_{1}$ and $\hat{f}_{2}$ are in the opposite ray, i.e., there exists $\alpha < 0$ such that $\hat{f}_{2} = \alpha \hat{f}_{1}$. 

	Since $S^{M}_{N}$ is finite, all relevant acts and DGPs can be identified as vectors in a finite-dimensional Euclidean space. Henceforth, I use the standard inner product $\langle P, f \rangle$ to denote the expected payoff $W(f, P)$. First observe that, fix an $f$, any $P \in \Delta(S_{N}^{M})$ can be uniquely decomposed into $\hat{P}$ and $\hat{P}^{\perp}$ such that 
	\begin{equation*}
			\langle P, \hat{f} \rangle = \langle \hat{P}, \hat{f} \rangle + \langle \hat{P}^{\perp}, \hat{f} \rangle = \langle \hat{P}, \hat{f} \rangle + 0. 
	\end{equation*}
	That is, $\hat{P}$ is parallel to $\hat{f}$ and $\hat{P}^{\perp}$ is orthogonal to $\hat{f}$. When $\hat{f}_{2}$ and $\hat{f}_{1}$ are in the opposite ray, it follows that the same decomposition works for both $\hat{f}_{1}$ and $\hat{f}_{2}$. Thus, for any $P \in \Delta_{indep}(S_{N}^{M})$, let $\hat{P}$ denote its component parallel to both $\hat{f}_{1}$ and $\hat{f}_{2}$, then 
	\begin{equation*}
		\langle P, \hat{f}_{j} \rangle = \langle \hat{P}, \hat{f}_{j} \rangle, \text{for } j \in \{1,2\}. 
	\end{equation*}
	And thus,
	\begin{equation*}
		\langle P, f_{j} \rangle = f_{j}(\hat{\omega}_{N}) + \text{sgn}(\langle \hat{P}, \hat{f}_{j} \rangle)  \Vert \hat{P} \Vert\cdot \Vert \hat{f}_{j} \Vert, \text{for } j \in \{1,2\},
	\end{equation*}
	where $ \text{sgn}(\langle \hat{P}, \hat{f}_{j} \rangle) $ is always opposite for $f_{1}$ and $f_{2}$. 
	
    Fix any initial set $\mathcal{P}$ that includes both $P$ and $P'$. Define the following set using $f_{1}$: 
	\begin{equation*}
		\{\text{sgn}(\langle \hat{P}, \hat{f}_{1} \rangle)  \Vert \hat{P} \Vert: P \in \overline{co}(\mathcal{P})\},
	\end{equation*}
	which is closed and convex, and thus can be represented by a closed interval $[a,b] \subseteq \R$. Then for any $P \in \mathcal{P}$, let $p \in [a,b]$ denote its corresponding element in $[a,b]$ and notice
	\begin{align*}
		& \langle P, f_{1} \rangle = f_{1}(\hat{\omega}_{N}) + \Vert \hat{f}_{1} \Vert p,\\
		& \langle P, f_{2} \rangle = f_{2}(\hat{\omega}_{N}) - \Vert \hat{f}_{2} \Vert p. 
	\end{align*}
	It further implies that $a$ is the minimizer for $f_{1}$ and $b$ is the minimizer for $f_{2}$. Let $P_{1}$ and $P_{2}$ be the two minimizers in $\mathcal{P}$ corresponding to $a$ and $b$, respectively. Suppose $c(D) = f_{1}$, the other case is symmetric. 

    As $P, P' \in \mathcal{P}$, both acts can be the unique maximizer of $W(f,P)$ among $P \in \mathcal{P}$, thus 
	\begin{align*}
		&f_{1}(\hat{\omega}_{N}) + \Vert \hat{f}_{1} \Vert a < f_{2}(\hat{\omega}_{N}) - \Vert \hat{f}_{2} \Vert a,\\
		&f_{1}(\hat{\omega}_{N}) + \Vert \hat{f}_{1} \Vert b > f_{2}(\hat{\omega}_{N}) - \Vert \hat{f}_{2} \Vert b.
	\end{align*}
	Therefore, there then exists $c \in (a,b)$ such that  
	\begin{equation*}
		f_{1}(\hat{\omega}_{N}) + \Vert \hat{f}_{1} \Vert c = f_{2}(\hat{\omega}_{N}) - \Vert \hat{f}_{2} \Vert c.
	\end{equation*}
	Since $f_{1}(\hat{\omega}_{N}) + \Vert \hat{f}_{1} \Vert c > f_{1}(\hat{\omega}_{N}) + \Vert \hat{f}_{1} \Vert a$, there further exists $d \in (c, b)$ such that 
	\begin{equation*}
		f_{1}(\hat{\omega}_{N}) + \Vert \hat{f}_{1} \Vert a < f_{2}(\hat{\omega}_{N}) - \Vert \hat{f}_{2} \Vert d. 
	\end{equation*}
	By continuity of $\langle P, f_{2} \rangle$ in $P$, there exists $\tilde{P} \in \Delta_{indep}(S_{N}^{M})$ such that $\langle \tilde{P}, f_{2} \rangle - f_{2}(\hat{\omega}_{N}) = -\Vert \hat{f}_{2} \Vert d$, i.e., the data revised set $\mathcal{P}(\omega^{N}) = \{P_{1}, \tilde{P} \}$, after taking the closed and convex hull, would correspond to $[a, d] \subsetneqq [a,b]$. The last inequality implies $c(D, \omega^{N}) = f_{2}$. 
    
    Again by continuity, there exists $P^{*}\in \Delta_{indep}(S_{N}^{M})$ that corresponds to some $p^{*} \in (c,d]$. Further let the data-revised set be $\mathcal{P}(\omega^{N}) = \{P_{1}, \tilde{P}, P^{*} \}$, notice this does not change the data-revised decision and accommodates $P^{*}$. However, 
	\begin{equation*}
		f_{1}(\hat{\omega}_{N}) + \Vert \hat{f}_{1} \Vert p^{*} > f_{2}(\hat{\omega}_{N}) - \Vert \hat{f}_{2} \Vert p^{*},
	\end{equation*}
	i.e., $W(c(D), P^{*}) > W(c(D, \omega^{N}), P^{*})$.
    \bigskip 
	
	\textbf{Case 2:} $\hat{f}_{2}$ is not in the affine hull of $\hat{f}_{1}$. As $W(f_{1}, P) > W(f_{2}, P)$ and $W(f_{1}, P') < W(f_{2}, P')$, by continuity, there exists $P_{0} \in \Delta_{indep}(S_{N}^{M})$ such that $W(f_{1}, P_{0}) = W(f_{2}, P_{0})$. 
    
    For any $P \in \Delta_{indep}(S_{N}^{M})$, decompose it into $\hat{P}$ and $\hat{P}^{\perp}$ with respect to $\hat{f}_{2}$ so that 
	\begin{equation*}
		\langle P, \hat{f}_{2} \rangle = \langle \hat{P}, \hat{f}_{2} \rangle. 
	\end{equation*}
    Then for $f_{1}$, one has 
	\begin{equation*}
		\langle P, \hat{f}_{1} \rangle = \langle \hat{P}, \hat{f}_{1} \rangle + \langle \hat{P}^{\perp}, \hat{f}_{1} \rangle,
	\end{equation*}
	where $\langle \hat{P}^{\perp}, f_{1} \rangle \neq 0$ whenever $ \hat{P}^{\perp}$ is non-zero. 
    
    Since $P_{0}$ has full support, it lies in the interior of $(\Delta_{indep}(S_{N}^{M}))$ (identified as the set $(\Delta(S))^{M-N}$). Thus there always exists $\epsilon > 0$ such that the open ball (in $\R^{(M - N) \times |S|}$) centered at $P_{0}$ with radius $\epsilon$ is entirely contained in $\Delta_{indep}(S_{N}^{M})$. Let $\mathcal{P}$ be a superset of this open ball and suppose $c(D) = f_{1}$, the other case is symmetric. 
    
    From the open ball, one can also find $P_{\epsilon}$ such that 
	\begin{equation*}
		\hat{P}_{\epsilon} = \hat{P}_{0}, \text{and } \langle \hat{P}_{\epsilon}^{\perp}, \hat{f}_{1} \rangle > \langle  \hat{P}_{0}^{\perp}, \hat{f}_{1} \rangle.
	\end{equation*}
	This further implies 
	\begin{align*}
		\langle P_{\epsilon}, f_{1} \rangle & = f_{1}(\hat{\omega}_{N}) + \langle \hat{P}_{\epsilon}, \hat{f}_{1} \rangle + \langle \hat{P}_{\epsilon}^{\perp}, \hat{f}_{1} \rangle \\
		& =  f_{1}(\hat{\omega}_{N}) + \langle \hat{P}_{0}, \hat{f}_{1} \rangle + \langle \hat{P}_{\epsilon}^{\perp}, \hat{f}_{1} \rangle\\
		& >  f_{1}(\hat{\omega}_{N}) + \langle \hat{P}_{0}, \hat{f}_{1} \rangle + \langle \hat{P}_{0}^{\perp}, \hat{f}_{1} \rangle \\
		& = \langle P_{0}, f_{1} \rangle = \langle P_{0}, f_{2} \rangle = f_{1}(\hat{\omega}_{N}) + \langle \hat{P}_{0}, \hat{f}_{2} \rangle\\
		& = f_{2}(\hat{\omega}_{N}) + \langle \hat{P}_{\epsilon}, \hat{f}_{2} \rangle = \langle P_{\epsilon}, f_{2} \rangle. 
	\end{align*}
	Similarly, one can find $P_{-\epsilon}$ in the open ball with the property that 
	\begin{equation*}
		\langle P_{-\epsilon}, f_{1} \rangle <  \langle P_{0}, f_{1} \rangle = \langle P_{0}, f_{2} \rangle = \langle P_{-\epsilon}, f_{2} \rangle. 
	\end{equation*}
	Then let the data-revised set be $\{P_{-\epsilon}, P_{\epsilon} \}$ and let $P^{*} = P_{\epsilon}$. In this case, the data-revised decision is $c(D, \omega^{N}) = f_{2}$, but 
	\begin{equation*}
		W(f_{2}, P^{*}) =  \langle P_{\epsilon}, f_{2} \rangle < \langle P_{\epsilon}, f_{1} \rangle = W(f_{1}, P^{*}),
	\end{equation*}	
	i.e., the conclusion. 

    The following example provides a simpler illustration of the main intuition in Case 2: 
    \begin{example}\label{example: non_monotone}
    Suppose $D = \{f_{1}, f_{2}\}$ and all acts depend on a single experiment with three possible outcomes, i.e., $S = \{s_{1}, s_{2}, s_{3}\}$. Let $f_{1} = (0, 2, 0)$ and $f_{2} = (-1, 1, 1)$. $\hat{f}_{2}$ is not in the affine hull of $\hat{f}_{1}$. In terms of the marginal distribution over $S$, let $\mathcal{P} = \Delta(S)$, $\mathcal{P}(\omega^{N}) = \Delta(\{s_{2}, s_{3}\})$, and $P^{*} = \delta_{s_{2}}$. Notice $\mathcal{P}(\omega^{N})$ is a truth-accommodating refinement. Then $c(D) = f_{1}$, $c(D, \omega^{N}) = f_{2}$, but $W(f_{1}, P^{*}) = 2 > 1 = W(f_{2}, P^{*})$.
    \end{example}

    \begin{remark}\label{remark: non_finitely_based}
        The assumption of finitely-based acts is used to embed $\Delta_{indep}(S_{N}^{M})$ into Euclidean space and show that $W(f, P)$ is continuous over $\Delta_{indep}(S_{N}^{M})$. This continuity breaks down when the acts are not finitely based, i.e., depend on tail events. By Kolmogorov's zero-one law, all tail events have a probability of either 0 or 1, and thus $W(f, P)$ is no longer continuous in $P$. 
        
        In particular, consider a non-monotone decision problem where the two acts are bets on and against a tail event $A$. This corresponds to Case 1 in the proof. For this decision problem, however, all DGPs in $\mathcal{P}$ correspond to the two endpoints $a$ and $b$ and nothing else. Therefore, if the data-revised set is a truth-accommodating refinement, then the data-revised decision is either the same as the benchmark decision or exactly optimal under the true DGP. In other words, a truth-accommodating refinement, ``inconveniently", leads to an objective improvement in such a non-monotone decision problem. This fact, however, does not generalize to all non-monotone problems with tail-event acts. Notice the previous example is still true when replacing $s_{1}, s_{2}$ and $s_{3}$ by some tail events. The assumption of finitely-based acts is needed to simplify the statement in light of this subtlety.
    \end{remark}
\end{proof}

\subsection{Proof of Theorem \ref{thm_impossible}}\label{thm_impossible_proof}
\begin{proof}

	\textbf{IF.} The inequality $W(c(D, \omega^{N}), P^{*}) \geq W(c(D), P^{*})$ is trivially true when the data-revised and benchmark decisions are the same. Consider any $D\in \mathcal{D}$ where these two are different, denote $f = c(D)$ and $f' = c(D, \omega^{N})$. The condition in the theorem says that, for some $\alpha \in (0, 1]$, 
	\begin{equation}\label{equ_alpha}
		\mathcal{P}(\omega^{N}) = \alpha P^{*} + (1-\alpha)\mathcal{P}. 
	\end{equation}
	By optimality of $f$ and $f'$ under $\mathcal{P}$ and $\mathcal{P}(\omega^{N})$ respectively, one has 
	\begin{equation}\label{objectivedominanceinequality1}
		\min\limits_{P \in \mathcal{P}} \int_{\Omega_{N}}   f(\tilde{\omega}_{N})dP(\tilde{\omega}_{N})  \geq \min\limits_{P \in \mathcal{P}} \int_{\Omega_{N}}   f'(\tilde{\omega}_{N})dP(\tilde{\omega}_{N}) , 
	\end{equation}
	and 
	\begin{equation}\label{objectivedominanceinequality2}
		\min\limits_{P \in \mathcal{P}(\omega^{N})} \int_{\Omega_{N}} f'(\tilde{\omega}_{N})dP(\tilde{\omega}_{N}) > \min\limits_{P \in \mathcal{P}(\omega^{N})} \int_{\Omega_{N}} f(\tilde{\omega}_{N})dP(\tilde{\omega}_{N}),
	\end{equation}
	where the strict inequality is required by the tie-breaking assumption (Assumption \ref{assumption4}) and that $f \neq f'$. Substituting \eqref{equ_alpha} into \eqref{objectivedominanceinequality2} yields, 
	\begin{align*}
		&\alpha \int_{\Omega_{N}} f'(\tilde{\omega}_{N})dP^{*}(\tilde{\omega}_{N}) + (1-\alpha)\min\limits_{P \in \mathcal{P}} \int_{\Omega_{N}} f'(\tilde{\omega}_{N})dP(\tilde{\omega}_{N}) 	\\
		>& \alpha \int_{\Omega_{N}} f(\tilde{\omega}_{N})dP^{*}(\tilde{\omega}_{N}) + (1-\alpha)\min\limits_{P \in \mathcal{P}_{N}} \int_{\Omega_{N}} f(\tilde{\omega}_{N})dP(\tilde{\omega}_{N}). 
	\end{align*}
	Rearranging terms to get, 
	\begin{align*}
		&\alpha \left[ \int_{\Omega_{N}} f'(\tilde{\omega}_{N})dP^{*}(\tilde{\omega}_{N})  - \int_{\Omega_{N}} f(\tilde{\omega}_{N})dP^{*}(\tilde{\omega}_{N}) \right] \\
		>& (1-\alpha) \left[\min\limits_{P \in \mathcal{P}_{N}^{\infty}} \int_{\Omega_{N}} f(\tilde{\omega}_{N})dP(\tilde{\omega}_{N}) -\min\limits_{P \in \mathcal{P}} \int_{\Omega_{N}} f'(\tilde{\omega}_{N})dP(\tilde{\omega}_{N})  \right] \geq 0,
	\end{align*}
	where the last inequality follows from inequality \eqref{objectivedominanceinequality1}. As $\alpha > 0$, one has, 
	\begin{equation*}
		W(c(D, \omega^{N}),P^{*}) - W(c(D),P^{*}) > 0.
	\end{equation*}
	
    \bigskip 

	\textbf{ONLY IF.} Consider the contra-positive statement: If there does not exist any $\alpha \in (0,1]$ such that the condition holds, then there must exist $D \in \mathcal{D}$ with $W(c(D, \omega^{N}), P^{*}) < W(c(D), P^{*})$. If $\mathcal{P}(\omega^{N})$ does not accommodate $P^{*}$, then the conclusion follows from Theorem \ref{thm: truth_accommodating_refinement_basic}. Consider the case where $\mathcal{P}(\omega^{N})$ accommodates $P^{*}$.
	The presumption implies  
	\begin{equation*}
		\mathcal{P}(\omega^{N}) \neq \mathcal{P}_{\alpha} \equiv \alpha P^{*} + (1-\alpha)\mathcal{P}, \forall \alpha \in (0,1], 
	\end{equation*}
	It thus implies the existence of $\hat{\alpha} \in [0,1]$ such that 
	\begin{equation*}
		\mathcal{P}(\omega^{N}) \subsetneqq \mathcal{P}_{\alpha}, \forall \alpha \leq \hat{\alpha},
	\end{equation*}
	and for all $\alpha > \hat{\alpha}$, there always exists $P' \in \mathcal{P}(\omega^{N})$ such that $	P' \notin \mathcal{P}_{\alpha}$. In other words, $P'$ is on the boundary of both $\mathcal{P}(\omega^{N})$ and $\mathcal{P}_{\hat{\alpha}}$. Since both sets are closed and convex, they admit the same supporting hyperplane at $P'$. In other words, there exists an act $f$, whose minimum expectation among both $\mathcal{P}(\omega^{N})$ and $\mathcal{P}_{\hat{\alpha}}$ are achieved at $P'$. It therefore implies that, 
	\begin{equation*}
		\min\limits_{P \in \mathcal{P}(\omega^{N})} W(f, P)= \min\limits_{P \in \hat{\alpha} P^{*} + (1-\hat{\alpha})\mathcal{P}} W(f,P) = \hat{\alpha}W(f, P^{*}) + (1-\hat{\alpha}) \min\limits_{P \in \mathcal{P}} W(f,P).
	\end{equation*}
	
	Next, because $\mathcal{P}(\omega^{N}) \subsetneqq \mathcal{P}_{\hat{\alpha}}$, by strong separating hyperplane theorem, there also must exist an act $g$ such that for some $\beta > \hat{\alpha}$, 
	\begin{equation*}
		\hat{\alpha} W(g, P^{*}) + (1-\hat{\alpha}) \min\limits_{P \in \mathcal{P}} W(g, P) < \min\limits_{P \in \mathcal{P}(\omega^{N})} W(g, P) 
		= \beta  W(g, P^{*}) + (1-\beta) \min\limits_{P \in \mathcal{P}} W(g, P), 
	\end{equation*}
	where $\beta >\hat{\alpha}$ follows from the fact that for such $g$, $W(g, P^{*}) > \min\limits_{P \in \mathcal{P}} W(g,P)$. One can normalize $g$ by taking mixtures with constant acts to get 
	\begin{align*}
		W(g, P^{*}) = 1/2, \text{ and }  \min\limits_{P \in \mathcal{P}}  W(g, P) = 0. 
	\end{align*}
	Then normalize $f$ similarly to get for some $\epsilon_{1}, \epsilon_{2} \in (0, (\beta - \hat{\alpha})/2)$ such that 
	\begin{align*}
		W(f, P^{*}) = 1/2  + \epsilon_{1} \text{ and }  \min\limits_{P \in \mathcal{P}}  W(f, P) = \epsilon_{2}. 
	\end{align*}
	Then for the decision problem $D = \{f, g\}$, it is the case that $c(D) = f$ and $c(D, \omega^{N}) = g$. The second claim follows from
	\begin{align*}
		\min\limits_{P \in \mathcal{P}(\omega^{N})} W(f, P) &  = \hat{\alpha} W(f, P^{*}) + (1-\hat{\alpha}) \min\limits_{P \in \mathcal{P}}W(f, P)  \\
		& = \hat{\alpha}(1/2 + \epsilon_{1}) + (1-\hat{\alpha})\epsilon_{2}  < \beta/2 \\
		& = \beta W(g, P^{*}) + (1-\beta) \min\limits_{P \in \mathcal{P}}W(g, P) = \min\limits_{P \in \mathcal{P}(\omega^{N})} W(g, P).
	\end{align*}
	But $W(f, P^{*}) > W(g, P^{*})$, i.e. the conclusion. 
\end{proof}

\subsection{Proof of Proposition \ref{prop: improving_worst_case}}
\begin{proof}
    For any $D \in \mathcal{D}$, if the data-revised set accommodates the true DGP $P^{*}$ and refines the initial set, then 
	\begin{align*}
		W(c(D, \omega^{N}), P^{*}) &\geq \min\limits_{P \in \mathcal{P}(\omega^{N})} W(c(D, \omega^{N}), P) \\
		& \geq \min\limits_{P \in \mathcal{P}(\omega^{N})} W(c(D), P)\\
		& \geq \min\limits_{P \in \mathcal{P}} W(c(D), P),
	\end{align*}
	where the first inequality follows from $P^{*} \in \mathcal{P}(\omega^{N})$, the second inequality follows from the fact that $c(D, \omega^{N})$ is optimal with respect to $\mathcal{P}(\omega^{N})$, the third inequality follows from $\mathcal{P}(\omega^{N})\subsetneqq \mathcal{P}$. Notice the last inequality can be strict for some $D \in \mathcal{D}$ as $\mathcal{P}(\omega^{N}) \subsetneqq \mathcal{P}$.
\end{proof}

\subsection{Proof of Theorem \ref{thm_asymptotic}}\label{SLLN}
\begin{proof}
	The only argument is to show all possible DGPs satisfy the condition of Kolmogorov's strong law of large numbers:  
	
	\textbf{Kolmogorov's SLLN.} Let $\{X_{i}\}$ be a sequence of independent random variables. Define $Y_{N} = N^{-1}\sum_{i=1}^{N}X_{i}$ and $\bar{\mu}_{N} = N^{-1}\sum_{i=1}^{N} \mu_{i}$. If $E[X_{i}] = \mu_{i}$ and $var(X_{i}) = \sigma_{i}^{2}$, 
	\begin{equation*}
		\lim_{N \rightarrow \infty} \sum\limits_{i=1}^{N} \frac{\sigma_{i}^{2}}{i^{2}} < \infty,
	\end{equation*}
	then $Y_{N} - \bar{\mu}_{N} \rightarrow 0$ almost surely as $N \rightarrow \infty$. 
	
	Fix any possible DGP $P \in \Delta_{indep}(\Omega)$. For any $s \in S$, let $X_{i} = I\{\omega_{i} = s\}$. Then $X_{i}$'s are independent random variables with $E[X_{i}] = P_{i}(s)$ and $var(X_{i}) = P_{i}(s)(1-P_{i}(s))$. To see the condition of SLLN holds, 
	\begin{align*}
		\lim_{N \rightarrow \infty} \sum\limits_{i=1}^{N} \frac{\sigma_{i}^{2}}{i^{2}} & =  \lim_{N \rightarrow \infty} \sum\limits_{i=1}^{N} \frac{P_{i}(s_{j})(1-P_{i}(s_{j}))}{i^{2}} \\
		& \leq \lim_{N \rightarrow \infty} \sum\limits_{i=1}^{N} \frac{1}{i^{2}} =  \frac{\pi^{2}}{6}< \infty.
	\end{align*}
	Thus, $\Phi(\omega^{N})(s)- \bar{P}^{N}(s) \rightarrow 0$ almost surely for all $s \in S$. SLLN implies that for $P$-almost every $\omega$ and for any $\epsilon > 0$, there exists $\bar{N}(\omega, \epsilon, s)$ such that for all $N \geq \bar{N}(\omega, \epsilon, s)$, $|\Phi(\omega^{N})(s)- \bar{P}^{N}(s)| < \epsilon $. By definition, for any $\epsilon > 0$, the data-revised set obtained using the empirical distribution method can be written as 
	\begin{equation*}
		\mathcal{P}(\omega^{N}) = \left\{P \in \mathcal{P}: \cap_{s \in S} |\bar{P}^{N}(s) -  \Phi(\omega^{N})(s)| < \epsilon \right\}.
	\end{equation*}
	Let $\bar{N}(\omega, \epsilon) = \max\limits \{\bar{N}(\omega,\epsilon,s)\}$, then SLLN implies that for all $N \geq \bar{N}(\omega, \epsilon)$, $P \in \mathcal{P}(\omega^{N})$. Thus, the conclusion holds for all $\epsilon > 0$.
\end{proof}
		
	\subsection{Proof of Theorem \ref{thm_confident}} 
	\begin{proof}[Proof of Theorem \ref{thm_confident}]
		Some additional notations are required to state this proof. Let $S = \{s_{1}, \cdots, s_{d}\}$. For $s \in \{s_{1}, \cdots, s_{d-1}\}$, let $e_{s}$ denote the corresponding standard basis in $\R^{d-1}$. Let $\mathbf{X_{i}}: \Omega \rightarrow \R^{d-1}$ denote a $(d-1)$-dimensional random vector such that  $\mathbf{X_{i}} = e_{\omega_{i}}$ if $\omega_{i} \in \{ s_{1},\cdots, s_{d-1} \}$ and $\mathbf{X_{i}} = 0$ if $\omega_{i} = s_{d}$. For each $P$, let $\mathbf{P_{i}}$ denote the $(d-1)$-dimensional vector whose coordinates are $P_{i}(s)$ for $s \in \{s_{1}, \cdots, s_{d-1}\}$. If $P$ is the underlying probability measure, then the mean vector of $\mathbf{X_{i}}$ is $\boldsymbol\mu_{i} = \mathbf{P_{i}}$. Moreover, the covariance matrix $\boldsymbol\Sigma_{\mathbf{i}}$ of the random vector $\mathbf{X_{i}}$ is 
		\begin{equation*}
			[\boldsymbol\Sigma_{\mathbf{i}}]_{kl} = 
			\begin{cases}
				P_{i}(s_{k})(1-P_{i}(s_{k})) \quad &\text{ if } k=l;\\
				-P_{i}(s_{k})P_{i}(s_{l}) \quad &\text{ if } k \neq l.
			\end{cases}
		\end{equation*}
		Given the full-support assumption, the covariance matrix is always positive definite. Let $\mathbf{\bar{X}_{N}} \equiv N^{-1}\sum_{i=1}^{N} \mathbf{X_{i}}$ be the random vector given by the average. Let $\boldsymbol{\bar{\mu}_{N}} \equiv N^{-1}\sum_{i=1}^{N} \boldsymbol{\mu_{\mathbf{i}}}$ and $\boldsymbol{\bar{\Sigma}_{N}} \equiv N^{-1} \sum_{i=1}^{N}\boldsymbol\Sigma_{\mathbf{i}}$. 
		
		\begin{lemma}\label{lem1}
			Let $\mathcal{P}$ be any compact subset of $\Delta_{indep}(\Omega)$ such that all $P \in \mathcal{P}$ have full support. For any $P \in \mathcal{P}$, the Central Limit Theorem holds, i.e., $\sqrt{N}(\mathbf{\bar{X}_{N{}}} - \boldsymbol{\bar{\mu}_{N}}) \xrightarrow{D} \mathcal{N}\left(0, \boldsymbol{\bar{\Sigma}_{N{}}} \right)$. 
		\end{lemma}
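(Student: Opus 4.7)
The plan is to establish the Central Limit Theorem via the Cram\'er--Wold device combined with a direct characteristic-function expansion, exploiting the uniform boundedness of $\mathbf{X_i}-\boldsymbol\mu_i$ (which follows from $\mathbf{X_i}$ taking values in the finite set $\{0,e_{s_1},\ldots,e_{s_{d-1}}\}$). The two structural inputs are (i) this uniform bound, call it $M$, independent of $i$ and $P$, and (ii) the full-support assumption on $P$, which ensures that each $\boldsymbol\Sigma_{\mathbf{i}}$ is positive definite, so that the limiting Gaussian in each direction $\mathbf{c}$ is meaningful.

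First I would invoke Cram\'er--Wold to reduce to the scalar statement that for every $\mathbf{c}\in\R^{d-1}$,
$$T_N(\mathbf{c})\;:=\;\frac{1}{\sqrt{N}}\sum_{i=1}^{N}\mathbf{c}^\top(\mathbf{X_i}-\boldsymbol\mu_i)\;\xrightarrow{D}\;\mathcal{N}\!\bigl(0,\,\mathbf{c}^\top\boldsymbol{\bar\Sigma_N}\mathbf{c}\bigr),$$
where the target covariance depends on $N$ and convergence is interpreted as vanishing L\'evy (or Kolmogorov) distance. Setting $Y_i=\mathbf{c}^\top(\mathbf{X_i}-\boldsymbol\mu_i)$, the $Y_i$ are independent, mean-zero, and uniformly bounded by some $M(\mathbf{c})$, with $\sigma_i^2=\mathbf{c}^\top\boldsymbol\Sigma_{\mathbf{i}}\mathbf{c}$ and $\mathrm{Var}(T_N(\mathbf{c}))=\mathbf{c}^\top\boldsymbol{\bar\Sigma_N}\mathbf{c}$. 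Second, I would compute the characteristic function of $T_N(\mathbf{c})$ by independence and Taylor-expand each factor: for each $i$,
$$E\bigl[e^{itY_i/\sqrt{N}}\bigr]\;=\;1-\tfrac{t^2\sigma_i^2}{2N}+r_{N,i}(t),\qquad |r_{N,i}(t)|\leq \tfrac{M(\mathbf{c})^3|t|^3}{6N^{3/2}}.$$
Taking logarithms, summing over $i$, and applying $\log(1+z)=z+O(z^2)$ on terms of order $O(1/N)$ yields
$$\log E\bigl[e^{itT_N(\mathbf{c})}\bigr]\;=\;-\tfrac{1}{2}t^2\,\mathbf{c}^\top\boldsymbol{\bar\Sigma_N}\mathbf{c}+O(N^{-1/2}),$$
which matches the log-characteristic function of the $N$-dependent Gaussian target up to $o(1)$. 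L\'evy's continuity theorem (with $N$-dependent target) delivers the scalar statement, and Cram\'er--Wold promotes it to the multivariate conclusion.

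The main obstacle, and the reason standard textbook CLTs cannot simply be quoted, is the $N$-dependence of the target covariance: the conclusion is not convergence to a fixed distribution but a uniform Gaussian approximation with shifting parameters. This is where the full-support hypothesis enters. It guarantees each $\boldsymbol\Sigma_{\mathbf{i}}$ is positive definite, which preserves non-triviality of the approximation in every direction $\mathbf{c}$ and, more importantly, invertibility of $\boldsymbol{\bar\Sigma_N}$ as needed for the construction of the regions of acceptance in Lemma~\ref{lem2}. An alternative route avoiding the characteristic-function calculation is the Lindeberg--Feller CLT for triangular arrays: uniform boundedness $|Y_i|\leq M(\mathbf{c})$ makes Lindeberg's condition immediate whenever $s_N^2:=N\,\mathbf{c}^\top\boldsymbol{\bar\Sigma_N}\mathbf{c}$ diverges, while the degenerate regime $s_N^2\not\to\infty$ is handled directly by Chebyshev's inequality (both $T_N(\mathbf{c})$ and the target collapse to $\delta_0$), so the stated convergence holds uniformly.
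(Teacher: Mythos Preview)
Your proposal is correct, and it shares the Cram\'er--Wold reduction with the paper's proof, but the scalar step is handled differently. The paper first standardizes by $\boldsymbol{\bar\Sigma_N}^{-1/2}$, sets $Z_{Ni}=\boldsymbol\lambda^\top\boldsymbol{\bar\Sigma_N}^{-1/2}\mathbf{X_i}$, and then applies Liapounov's CLT for triangular arrays; verifying the Liapounov moment bound forces it to show that the entries of $\boldsymbol{\bar\Sigma_N}^{-1/2}$ are uniformly bounded, which in turn requires a uniform lower bound on the eigenvalues of $\boldsymbol{\bar\Sigma_N}$, obtained from full support together with compactness of $\mathcal{P}$. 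You instead work with the unstandardized $T_N(\mathbf c)$ and exploit only the uniform bound $|Y_i|\leq M(\mathbf c)$, either through a direct third-order characteristic-function expansion or via Lindeberg--Feller with a separate Chebyshev argument in the degenerate regime, reading the $N$-dependent target as a vanishing-L\'evy-distance statement. Your route is more elementary and does not use compactness of $\mathcal{P}$ at all for this lemma. The trade-off is that the paper's argument delivers directly the standardized conclusion $\boldsymbol{\bar\Sigma_N}^{-1/2}\sqrt{N}(\mathbf{\bar X_N}-\boldsymbol{\bar\mu_N})\xrightarrow{D}\mathcal N(0,I)$, which is precisely what the chi-square acceptance regions in \eqref{equ_confidence_region} require; to get from your unstandardized statement to that one would still need the same uniform eigenvalue lower bound the paper establishes.
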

		
		The proof is standard by combining Liapounov's CLT for Triangular arrays with the Cramer-Wold device. Lemma \ref{lem1} implies one can use Gaussian approximations to construct acceptance regions. Specifically, for a $(d-1)$-dimensional random vector $\mathbf{X} \sim \mathcal{N}_{d-1}(\boldsymbol{\mu}, \boldsymbol\Sigma)$, the ellipsoidal region given by the set of all vectors $\mathbf{x}$ satisfying the following has a probability of $1-\alpha$: 
		\begin{equation*}
			(\mathbf{x} - \boldsymbol\mu)^{\intercal}\boldsymbol\Sigma^{-1} (\mathbf{x} - \boldsymbol\mu) \leq \chi_{d-1}^{2}(\alpha),
		\end{equation*}
		where $\chi_{d-1}^{2}(\alpha)$ is the upper $100\alpha\%$ quantile for the chi-square distribution with $d-1$ degrees of freedom. By definition, the set of all such vectors is the probability contour of the multivariate Gaussian distribution $\mathcal{N}_{d-1}(\boldsymbol{\mu}, \boldsymbol\Sigma)$ with probability $1-\alpha$. For any $P \in \Delta_{indep}(\Omega)$, define 
		\begin{equation}\label{equ_confidence_region}
			A^{*}_{N, \alpha}(P) = \{ \mathbf{x} \in \R^{d-1} : (\mathbf{x} - \boldsymbol{\bar{\mu}_{N}})^{\intercal}\boldsymbol{\bar{\Sigma}_{N}}^{-1} (\mathbf{x} - \boldsymbol{\bar{\mu}_{N}})  \leq \chi_{d-1}^{2}(\alpha) \}, 
		\end{equation}
		i.e., the probability contour of the multivariate Gaussian distribution $\mathcal{N}_{d-1}(\boldsymbol{\bar{\mu}_{N}}, \bar{\boldsymbol\Sigma}_{N})$ with probability $1-\alpha$. Then Lemma \ref{lem1} guarantees that 
		\begin{equation*}
			\liminf_{N \rightarrow \infty} P(A^{*}_{N,\alpha}(P)) \geq 1-\alpha.
		\end{equation*}
		Let $\boldsymbol{\hat{\Sigma}_{N}}$ denote the covariance matrix of the i.i.d. distribution $(\bar{P}^{N})^{\infty}$ and define
		\begin{equation}\label{equ_iid_confidence_region}
			A^{*}_{N,\alpha}((\bar{P}^{N})^{\infty}) = \{ \mathbf{x} \in \R^{d-1} : (\mathbf{x} - \boldsymbol{\bar{\mu}_{N}})^{\intercal}\boldsymbol{\hat{\Sigma}_{N}}^{-1} (\mathbf{x} - \boldsymbol{\bar{\mu}_{N}})  \leq \chi_{d-1}^{2}(\alpha) \}, 
		\end{equation}
		i.e., the corresponding probability contour for the multivariate Gaussian distribution $\mathbf{N}_{d-1}(\boldsymbol{\bar{\mu}_{N}}, \hat{\boldsymbol\Sigma}_{N})$. 
		
		\begin{lemma}\label{lem2}
			For any $P \in \Delta_{indep}(\Omega)$, $N$ and $\alpha$, $A^{*}_{N,\alpha}(P) \subseteq A^{*}_{N,\alpha}((\bar{P}^{N})^{\infty})$. 
		\end{lemma}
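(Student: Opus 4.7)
My plan is to reduce the claimed set inclusion to a positive semi-definite ordering of covariance matrices, and then to exhibit the difference $\boldsymbol{\hat{\Sigma}_{N}} - \boldsymbol{\bar{\Sigma}_{N}}$ as itself the covariance matrix of a certain auxiliary random vector. Both $A^{*}_{N,\alpha}(P)$ and $A^{*}_{N,\alpha}((\bar{P}^{N})^{\infty})$ are ellipsoids in $\R^{d-1}$ centered at the common point $\boldsymbol{\bar{\mu}_{N}}$ (note that $\bar{P}^{N}$ has the same average marginal as $P$, so it induces the same mean vector), with shape matrices $\boldsymbol{\bar{\Sigma}_{N}}$ and $\boldsymbol{\hat{\Sigma}_{N}}$ respectively. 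A standard fact is that if $A, B$ are symmetric positive definite with $A \preceq B$ in the Loewner order, then $B^{-1} \preceq A^{-1}$, so $(\mathbf{x} - \boldsymbol{\bar{\mu}_{N}})^{\intercal} B^{-1}(\mathbf{x} - \boldsymbol{\bar{\mu}_{N}}) \leq (\mathbf{x} - \boldsymbol{\bar{\mu}_{N}})^{\intercal} A^{-1}(\mathbf{x} - \boldsymbol{\bar{\mu}_{N}})$ for every $\mathbf{x}$. Applied with $A = \boldsymbol{\bar{\Sigma}_{N}}$ and $B = \boldsymbol{\hat{\Sigma}_{N}}$, this says: if $\boldsymbol{\hat{\Sigma}_{N}} - \boldsymbol{\bar{\Sigma}_{N}} \succeq 0$ then membership in $A^{*}_{N,\alpha}(P)$ (where the quadratic form using $\boldsymbol{\bar{\Sigma}_{N}}^{-1}$ is at most $\chi^{2}_{d-1}(\alpha)$) forces the quadratic form using $\boldsymbol{\hat{\Sigma}_{N}}^{-1}$ to be at most $\chi^{2}_{d-1}(\alpha)$ as well, i.e.\ membership in $A^{*}_{N,\alpha}((\bar{P}^{N})^{\infty})$. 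That is exactly the lemma.

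It thus remains to verify $\boldsymbol{\hat{\Sigma}_{N}} - \boldsymbol{\bar{\Sigma}_{N}} \succeq 0$. Reading off entries from the formula for $\boldsymbol{\Sigma}_{\mathbf{i}}$ given at the start of the proof, the $(k,l)$-entry of $\boldsymbol{\hat{\Sigma}_{N}}$ is $\bar{P}^{N}(s_{k}) I\{k=l\} - \bar{P}^{N}(s_{k})\bar{P}^{N}(s_{l})$, and the $(k,l)$-entry of $\boldsymbol{\bar{\Sigma}_{N}} = N^{-1}\sum_{i}\boldsymbol{\Sigma}_{\mathbf{i}}$ is $N^{-1}\sum_{i=1}^{N} P_{i}(s_{k})I\{k=l\} - N^{-1}\sum_{i=1}^{N} P_{i}(s_{k})P_{i}(s_{l})$. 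Since $\bar{P}^{N}(s_{k}) = N^{-1}\sum_{i} P_{i}(s_{k})$, the linear (indicator) terms cancel, and the $(k,l)$-entry of the difference collapses to $N^{-1}\sum_{i=1}^{N} P_{i}(s_{k})P_{i}(s_{l}) - \bar{P}^{N}(s_{k})\bar{P}^{N}(s_{l})$.

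The key observation is that this last expression is precisely the $(k,l)$-entry of the covariance matrix of the $(d-1)$-dimensional random vector $\mathbf{V} = (P_{I}(s_{1}), \ldots, P_{I}(s_{d-1}))$, where $I$ is drawn uniformly at random from $\{1,\ldots,N\}$: indeed $E[P_{I}(s_{k}) P_{I}(s_{l})] = N^{-1}\sum_{i} P_{i}(s_{k})P_{i}(s_{l})$ and $E[P_{I}(s_{k})] = \bar{P}^{N}(s_{k})$. Any covariance matrix is positive semi-definite, which gives $\boldsymbol{\hat{\Sigma}_{N}} - \boldsymbol{\bar{\Sigma}_{N}} \succeq 0$ and completes the proof. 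I do not anticipate a real obstacle here; the only step worth flagging is the Loewner monotonicity of matrix inversion used to pass from the PSD inequality between covariance matrices to the containment of the induced ellipsoids, which in turn relies on positive definiteness of $\boldsymbol{\bar{\Sigma}_{N}}$ and $\boldsymbol{\hat{\Sigma}_{N}}$---already established via the full-support assumption as in the proof of Lemma \ref{lem1}.
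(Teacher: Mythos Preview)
Your proof is correct and follows essentially the same route as the paper: both reduce the ellipsoid inclusion to the Loewner inequality $\boldsymbol{\hat{\Sigma}_{N}} - \boldsymbol{\bar{\Sigma}_{N}} \succeq 0$ via monotonicity of matrix inversion, compute the $(k,l)$-entry of the difference as $N^{-1}\sum_{i} P_{i}(s_{k})P_{i}(s_{l}) - \bar{P}^{N}(s_{k})\bar{P}^{N}(s_{l})$, and then recognize this matrix as automatically PSD. The only cosmetic difference is that the paper identifies $N(\boldsymbol{\hat{\Sigma}_{N}} - \boldsymbol{\bar{\Sigma}_{N}})$ as the Gram matrix of the vectors $v_{k} = (P_{i}(s_{k}) - \bar{P}^{N}(s_{k}))_{i=1}^{N}$, whereas you identify $\boldsymbol{\hat{\Sigma}_{N}} - \boldsymbol{\bar{\Sigma}_{N}}$ as the covariance matrix of $(P_{I}(s_{1}),\ldots,P_{I}(s_{d-1}))$ with $I$ uniform on $\{1,\ldots,N\}$; these are two names for the same object.
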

	
		Lemma \ref{lem2} further implies $P( A^{*}_{N,\alpha}(P)) \leq P(A^{*}_{N,\alpha}((\bar{P}^{N})^{\infty}))$. Therefore, it follows that 
		\begin{equation*}
			\liminf_{N \rightarrow \infty}  P(A^{*}_{N,\alpha}((\bar{P}^{N})^{\infty})) \geq \liminf_{N \rightarrow \infty} P(A^{*}_{N,\alpha}(P)) \geq 1-\alpha.
		\end{equation*}
		Finally, notice that 
		\begin{align*}
			\liminf_{N \rightarrow \infty}P(\{\omega^{N}: P_{N}^{\infty} \in \overline{co}(\mathcal{P}(\omega^{N})_{N}^{\infty}) \}) & \geq \liminf_{N \rightarrow \infty} P(\{ \omega^{N}: P \in \mathcal{P}(\omega^{N}) \}) \\
			&= \liminf_{N \rightarrow \infty} P(A^{*}_{N,\alpha}((\bar{P}^{N})^{\infty})) \geq 1-\alpha.
		\end{align*}
		Thus, the conclusion. 
	\end{proof}

\begin{proof}[Proof of Lemma \ref{lem1}]
	The following definitions are taken from \cite{WHITE1984107}:
	
	\textbf{Liapounov's CLT for Triangular Arrays.} 
	Let $\{ Z_{Ni} \}$ be a sequence of independent random scalars with $\mu_{Ni} \equiv E[Z_{Ni}]$, $\sigma_{Ni}^{2} = var(Z_{Ni})$, and $E|Z_{Ni} - \mu_{Ni}|^{2+\delta} < \Delta < \infty$ for some $\delta > 0$ and all $N$ and $i$. Define $\bar{Z}_{N} \equiv N^{-1}\sum_{i=1}^{N}Z_{Ni}$, $\bar{\mu}_{N} \equiv N^{-1}\sum_{i=1}^{N}\mu_{Ni}$ and $\bar{\sigma}_{N}^{2} \equiv var(\sqrt{N} \bar{Z}_{N}) = N^{-1}\sum_{i=1}^{N} \sigma_{Ni}^{2}$. If $\bar{\sigma}_{N}^{2} > \delta' > 0 $ for all $N$ sufficiently large, then 
	\begin{equation*}
		\sqrt{N}(\bar{Z}_{N} - \bar{\mu}_{N})/\bar{\sigma}_{N} \xrightarrow{D} \mathcal{N}(0,1).
	\end{equation*}
	
	\textbf{Cramer-Wold device.} Let $\{ \mathbf{b}_{N} \}$ be a sequence of random $k \times 1$  vectors and suppose that for any real $k \times 1$ vector $\boldsymbol\lambda$ such that $\boldsymbol\lambda^{\intercal}\boldsymbol\lambda = 1$, $\boldsymbol\lambda^{\intercal} \mathbf{b}_{N} \xrightarrow{D}\boldsymbol\lambda^{\intercal} \mathbf{Z}$ where $\mathbf{Z}$ is a $k\times 1$ vector with joint distribution function $F$. Then the limiting distribution of $ \mathbf{b}_{N}$ exists and equals $F$. \\
	
	For any $\boldsymbol\lambda$ with $\boldsymbol\lambda^{\intercal}\boldsymbol\lambda = 1$, consider the sequence of random variables given by $Y_{i} =  \boldsymbol\lambda^{\intercal}\mathbf{X_{i}}$. Then one has $E[Y_{i}] =\boldsymbol\lambda^{\intercal}\boldsymbol{\mu_{\mathbf{i}}}$ and $var(Y_{i}) = \boldsymbol\lambda^{\intercal} \boldsymbol\Sigma_{\mathbf{i}}\boldsymbol\lambda$. For $\bar{Y}_{N} = N^{-1}\sum_{i=1}^{N} Y_{i}$, one has $E[\bar{Y}_{N}] = \boldsymbol\lambda^{\intercal} \boldsymbol{\bar{\mu}_{N}}$ and
	\begin{equation*}
		var(\sqrt{N}\bar{Y}_{N}) =  \boldsymbol\lambda^{\intercal} \boldsymbol{\bar{\Sigma}_{N}}\boldsymbol\lambda = \boldsymbol\lambda^{\intercal}(\boldsymbol{\bar{\Sigma}_{N}}^{1/2})^{\intercal}(\boldsymbol{\bar{\Sigma}_{N}}^{1/2})\boldsymbol\lambda.
	\end{equation*}
	
	The existence of $\boldsymbol{\bar{\Sigma}_{N}}^{1/2}$ is guaranteed by the fact that $\boldsymbol{\bar{\Sigma}_{N}}$ is positive definite. Moreover, let it be the positive square root of $\boldsymbol{\bar{\Sigma}_{N}}$ so that it is unique, symmetric, and positive definite (Theorem 7.2.6 (a) in \citet{horn2012matrix}). Let $Z_{Ni} =\boldsymbol\lambda^{\intercal}\boldsymbol{\bar{\Sigma}_{N}}^{-1/2}\mathbf{X_{i}}$. Then one has $E[\bar{Z}_{N}] = \boldsymbol\lambda^{\intercal}\boldsymbol{\bar{\Sigma}_{N}}^{-1/2}\boldsymbol{\bar{\mu}}$ and 
	\begin{align*}
		var(\sqrt{N}\bar{Z}_{N}) & =  \boldsymbol\lambda^{\intercal} (\boldsymbol{\bar{\Sigma}_{N}}^{-1/2})^{\intercal} var(\sqrt{N}\mathbf{\bar{X}}_{N})\boldsymbol{\bar{\Sigma}_{N}}^{-1/2}\boldsymbol\lambda  \\
		& =  \boldsymbol\lambda^{\intercal} (\boldsymbol{\bar{\Sigma}_{N}}^{-1/2})^{\intercal} (\boldsymbol{\bar{\Sigma}_{N}}^{1/2})^{\intercal}(\boldsymbol{\bar{\Sigma}_{N}}^{1/2})(\boldsymbol{\bar{\Sigma}_{N}}^{-1/2})\boldsymbol\lambda= 1.
	\end{align*}
	If Liapounov's CLT holds, then  $\sqrt{N}(\bar{Z}_{N} - E[\bar{Z}_{N}])\xrightarrow{D} \mathcal{N}(0,1)$. Applying the Cramer-Wold device will imply the Lemma. Thus, it remains to show that the Liapounov condition holds for any $P \in \mathcal{P}$ and for any $\boldsymbol\lambda$. That is, for some $\delta > 0$, there exists $\Delta < \infty$ such that for any $N$ and $i$, $	E|Z_{Ni}|^{2 + \delta} < \Delta$. By Minkowski's inequality, 
	\begin{equation*}
		E[|Z_{Ni}|^{2 + \delta}] \leq \sum_{j}\lambda_{j} E|(\boldsymbol{\bar{\Sigma}_{N}}^{-1/2}\mathbf{X_{i}})_{j}|^{2+\delta}.
	\end{equation*}
	Notice that $\mathbf{X}_{i}$ is either $0$ or $e_{k}$. The RHS is less than the following
	\begin{equation*}
		(\max_{k,l} [\boldsymbol{\bar{\Sigma}_{N}}^{-1/2}]_{kl})^{2+\delta}.
	\end{equation*}
	Since $\boldsymbol{\bar{\Sigma}_{N}}^{-1/2}$ is symmetric and positive semi-definite, the largest entry must be on the diagonal. The trace of the matrix is equal to the sums of the eigenvalues. Both the diagonal elements and eigenvalues are non-negative due to positive definiteness (Corollary 7.1.5 in \cite{horn2012matrix}). Thus, it suffices to show the eigenvalues of $\boldsymbol{\bar{\Sigma}_{N}}^{-1/2}$ are bounded from above. Moreover, as each eigenvalue of $\boldsymbol{\bar{\Sigma}_{N}}^{-1/2}$ is the square root of the eigenvalue of $\boldsymbol{\bar{\Sigma}_{N}}^{-1}$. Thus, one only needs to show the eigenvalues of $\boldsymbol{\bar{\Sigma}_{N}}^{-1}$ are bounded from above, which is equivalent to showing the eigenvalues of $\boldsymbol{\bar{\Sigma}_{N}}$ are bounded away from zero. 
	
	For every $\boldsymbol\Sigma_{\mathbf{i}}$, the smallest eigenvalue is bounded below by $\min_{j}P_{i}(s_{j})$ \citep{watson1996}. Furthermore, as each $\boldsymbol\Sigma_{\mathbf{i}}$ is symmetric and positive definite, the minimum eigenvalue of $\sum_{i=1}^{n}\boldsymbol\Sigma_{\mathbf{i}}$ ($n$ times the minimum eigenvalue of $\boldsymbol{\bar{\Sigma}_{N}}$) is greater than the sum of the minimum eigenvalues of every $\boldsymbol\Sigma_{\mathbf{i}}$ (Corollary 4.3.15 in \cite{horn2012matrix}). Therefore, it suffices to show that $P_{i}(s_{j})$ is bounded away from zero for every $i$ and $j$. By assumption, $\mathcal{P}$ is compact thus $\min_{j,i} P_{i}(s_{j})$ exists. By full-support assumption, the minimum is always positive. 
\end{proof}

\begin{proof}[Proof of Lemma \ref{lem2}]
	It suffices to show that for any $P \in \Delta_{indep}(\Omega)$, $N$, constant $c \in \R$, and for all $\mathbf{x}$, 
	\begin{equation*}
		(\mathbf{x} - \boldsymbol{\bar{\mu}_{N}})^{\intercal}\boldsymbol{\bar{\Sigma}_{N}}^{-1} (\mathbf{x} - \boldsymbol{\bar{\mu}_{N}})\leq c \Rightarrow \boldsymbol (\mathbf{x} - \boldsymbol{\bar{\mu}_{N}})^{\intercal}\boldsymbol{\hat{\Sigma}_{N}} ^{-1} \boldsymbol (\mathbf{x} - \boldsymbol{\bar{\mu}_{N}}) \leq c,
	\end{equation*}
	which is equivalent to
	\begin{equation*}
		(\mathbf{x} - \boldsymbol{\bar{\mu}_{N}})^{\intercal}\boldsymbol{\bar{\Sigma}_{N}}^{-1} (\mathbf{x} - \boldsymbol{\bar{\mu}_{N}})-\boldsymbol (\mathbf{x} - \boldsymbol{\bar{\mu}_{N}})^{\intercal}\boldsymbol{\hat{\Sigma}_{N}}^{-1} \boldsymbol (\mathbf{x} - \boldsymbol{\bar{\mu}_{N}}) \geq 0 , \forall \mathbf{x}. 
	\end{equation*}
	In other words, for any $\mathbf{x}$, 
	\begin{equation*}
		\mathbf{x}^{\intercal} (\boldsymbol{\bar{\Sigma}_{N}}^{-1}-\boldsymbol{\hat{\Sigma}_{N}}^{-1})  \mathbf{x}\geq 0
	\end{equation*}
	That is, the lemma is true if $(\boldsymbol{\bar{\Sigma}_{N}}^{-1}-\boldsymbol{\hat{\Sigma}_{N}}^{-1})$ is a positive semi-definite matrix. It is further equivalent to, according to Corollary 7.7.4 in \cite{horn2012matrix}, $(\boldsymbol{\hat{\Sigma}_{N}} - \boldsymbol{\bar{\Sigma}_{N}})$ being positive semi-definite. The two covariance matrices are given by, respectively, 
	\begin{equation*}
		[\boldsymbol{\hat{\Sigma}}_{N}]_{kl} = 
		\begin{cases}
			\bar{P}^{N}(s_{k})(1-\bar{P}^{N}(s_{k})) \quad &\text{ if } k=l, \\
			-\bar{P}^{N}(s_{k})\bar{P}^{N}(s_{l}) \quad &\text{ if } k \neq l, 
		\end{cases}
	\end{equation*}
	and 
	\begin{equation*}
		[\boldsymbol{\bar{\Sigma}}_{N}]_{kl} = 
		\begin{cases}
			N^{-1}\sum\limits_{i=1}^{N} P_{i}(s_{k})(1-P_{i}(s_{k})) \quad &\text{ if } k=l,\\
			-N^{-1}\sum\limits_{i=1}^{N} P_{i}(s_{k})P_{i}(s_{l}) \quad &\text{ if } k \neq l.
		\end{cases}
	\end{equation*}
	By algebra, one can show, 
	\begin{equation*}
		N[\boldsymbol{\hat{\Sigma}}_{N} - \boldsymbol{\bar{\Sigma}}_{N}]_{kl} = 
		\begin{cases}
			\sum\limits_{i=1}^{N} (P_{i}(s_{k}) - \bar{P}^{N}(s_{k}))^{2}\quad &\text{ if } k=l;\\
			\sum\limits_{i=1}^{N} (P_{i}(s_{k}) - \bar{P}^{N}(s_{k}))(P_{i}(s_{l}) - \bar{P}^{N}(s_{l})) \quad &\text{ if } k \neq l.
		\end{cases}
	\end{equation*}
	
	Notice that $N[\boldsymbol{\hat{\Sigma}}_{N} - \boldsymbol{\bar{\Sigma}}_{N}]_{kl}$ is a Gram Matrix $G_{kl} = \langle v_{k}, v_{l}\rangle$, where the set of vectors are given by: 
	\begin{equation*}
		[v_{k}]_{i} = P_{i}(s_{k}) - \bar{P}^{N}(s_{k}).
	\end{equation*}
	A Gram Matrix is always positive semi-definite (Theorem 7.2.10 in \cite{horn2012matrix}), therefore $\boldsymbol{\hat{\Sigma}}_{N} - \boldsymbol{\bar{\Sigma}}_{N}$ is positive semi-definite as desired. 
\end{proof}

\section{When Experiments can be Dependent}\label{apx: dependent}
Throughout the main text, the possible data-generating processes are assumed to be independent across experiments. In this section, I show that all results in Section \ref{characterization} continue to hold after a slight modification of the definitions to take dependence into account. I then discuss how results in Section \ref{revision} can also be generalized accordingly.

For this section only, let the initial set $\mathcal{P}$ be a compact subset of $\Delta(\Omega)$. 
As before, assume that every $P\in\mathcal{P}$ has full support. 
For any $P\in\mathcal{P}$ and sample data $\omega^{N}\in S^{N}$, define the marginal distribution over future states conditional on $\omega^{N}$ by
\begin{equation}\label{equ_conditional}
    P(\tilde{\omega}_{N}\mid \omega^{N})
    = \frac{P(\omega^{N},\tilde{\omega}_{N})}{P(\omega^{N})},
    \qquad \forall\, \tilde{\omega}_{N}\in\Omega_{N}.
\end{equation}
Let $\mathcal{P}(\cdot\mid\omega^{N})$ denote the set of such conditional distributions.

The \emph{benchmark decision} continues to use the initial set $\mathcal{P}$ but conditions on the observed sample $\omega^{N}$ through the set of marginals $\mathcal{P}(\cdot\mid\omega^{N})$. 
This corresponds to the decision of a DM who applies prior-by-prior, or full-Bayesian, updating to each $P\in\mathcal{P}$. 
Formally,
\begin{equation*}
    c(D\mid\omega^{N})
    \equiv 
    \arg\max_{f\in D}
    \min_{P\in\mathcal{P}(\cdot\mid\omega^{N})}
        \int_{\Omega_{N}} f(\tilde{\omega}_{N})\, dP(\tilde{\omega}_{N})
    =
    \arg\max_{f\in D}
    \min_{P\in\overline{co}(\mathcal{P}(\cdot\mid\omega^{N}))}
        \int_{\Omega_{N}} f(\tilde{\omega}_{N})\, dP(\tilde{\omega}_{N}).
\end{equation*}

Let $\mathcal{P}(\omega^{N})$ denote the data-revised set, and $\mathcal{P}(\omega^{N})(\cdot\mid\omega^{N})$ the corresponding set of conditional distributions. 
The \emph{data-revised decision} is then
\begin{equation*}
    c(D,\omega^{N}\mid\omega^{N})
    \equiv 
    \arg\max_{f\in D}
    \min_{P\in\overline{co}(\mathcal{P}(\omega^{N})(\cdot\mid\omega^{N}))}
        \int_{\Omega_{N}} f(\tilde{\omega}_{N})\, dP(\tilde{\omega}_{N}).
\end{equation*}
In words, the data-revised decision is obtained through a two-step procedure: first revising the initial set to $\mathcal{P}(\omega^{N})$, and then conditioning each DGP in the revised set on the observed data.

The data-revised set \emph{accommodates} a DGP $P$ if
\begin{equation*}
    P(\cdot\mid\omega^{N})
    \in
    \overline{co}(\mathcal{P}(\omega^{N})(\cdot\mid\omega^{N})).
\end{equation*}
This condition holds whenever $P\in\overline{co}(\mathcal{P}(\omega^{N}))$, that is, whenever the data-revised set contains the true DGP. 
The data-revised set \emph{refines} the initial set if
\begin{equation*}
    \overline{co}(\mathcal{P}(\omega^{N})(\cdot\mid\omega^{N}))
    \subsetneqq
    \overline{co}(\mathcal{P}(\cdot\mid\omega^{N})).
\end{equation*}
A \emph{truth-accommodating refinement} is a data-revised set that both accommodates the true DGP $P^{*}$ and refines the initial set.
\bigskip 

Notice that all results in Section \ref{characterization} rely on the independence assumption only through the definitions of the benchmark decision, the data-revised decision, and the truth-accommodating refinement. Once these definitions are modified to incorporate dependence, all proofs continue to hold without change. More specifically, one can simply let $\mathcal{P}(\cdot\mid\omega^{N})$ be the initial set, $\mathcal{P}(\omega^{N})(\cdot\mid\omega^{N})$ the data-revised set, and $P^{*}(\cdot\mid\omega^{N})$ the true DGP, and directly apply the results.

\begin{theorem}
Suppose $\mathcal{P} \subset \Delta(\Omega)$. If the benchmark decision, data-revised decision, and truth-accommodating refinement are defined as in this section, then all results in Section \ref{characterization} continue to hold. 
\end{theorem}

For results in Section \ref{revision}, as already noted in the main text, Theorem \ref{thm_asymptotic} continues to hold as long as a version of the strong law of large numbers applies. Theorem \ref{thm_confident} is more delicate, but by the same token, it is natural to conjecture that it can also be generalized under similar conditions. 

Finally, Theorem \ref{thm_asymptotic} can be further extended to the case where $S$ is infinite. In that case, the proof would invoke the generalized Glivenko-Cantelli theorem for independent but non-identically distributed random variables, as presented in \cite{WELLNER1981309}.

\bibliographystyle{econ}
\bibliography{mylibrary.bib}

\end{document}